\algrenewcommand\alglinenumber[1]{\scriptsize #1:}
\algrenewcommand\algorithmicindent{1em}%
\tikzset{
    cross/.pic = {
    \draw[rotate = 45] (-#1,0) -- (#1,0);
    \draw[rotate = 45] (0,-#1) -- (0, #1);
    }
}
\newcommand{\bea}{\begin{eqnarray}}
\newcommand{\eea}{\end{eqnarray}}
\newcommand{\bean}{\begin{eqnarray*}}
\newcommand{\eean}{\end{eqnarray*}}
\newcommand{\sbinom}[2]{\left( \begin{array}{c} #1 \\ #2 \end{array} \right) }
\newcommand{\tx}{\Tilde{\bfx}}
\newcommand{\cA}{{\cal A}}
\newcommand{\cB}{{\cal B}}
\newcommand{\cE}{{\cal E}}
\newcommand{\cG}{{\cal G}}
\newcommand{\cL}{{\cal L}}
\newcommand{\cM}{{\cal M}}
\newcommand{\cN}{{\cal N}}
\newcommand{\cP}{{\cal P}}
\newcommand{\cR}{{\cal R}}
\newcommand{\cS}{{\cal S}}
\newcommand{\cT}{{\cal T}}
\newcommand{\cX}{{\cal X}}
\newcommand{\cY}{{\cal Y}}
\newcommand{\sG}{\script{G}}
\newcommand{\sP}{\script{P}}
\newcommand{\bfa}{{\boldsymbol a}}
\newcommand{\bfb}{{\boldsymbol b}}
\newcommand{\bfd}{{\boldsymbol d}}
\newcommand{\bfx}{{\boldsymbol x}}
\newcommand{\bfy}{{\boldsymbol y}}
\DeclareMathOperator*{\argmin}{arg\,min}
\DeclareMathAlphabet{\mathbfsl}{OT1}{cmr}{bx}{it}
\newcommand{\uuu}{\kern-1pt\mathbfsl{u}\kern-0.5pt}
\newcommand{\vvv}{\kern-1pt\mathbfsl{v}\kern-0.5pt}
\newcommand{\myboxplus}{\kern1pt\mbox{\small$\boxplus$}}
\newcommand\blfootnote[1]{%
  \begingroup
  \renewcommand\thefootnote{}\footnote{#1}%
  \addtocounter{footnote}{-1}%
  \endgroup
}
\makeatletter \DeclareRobustCommand{\sbinom}{\genfrac[]\z@{}}
\newcommand{\G}[2]{\sbinom{{#1}\kern-1pt}{{#2}\kern-1pt}}
\newcommand{\Gq}[2]{\sbinom{{#1}\kern-0.25pt}{{#2}\kern-0.25pt}}
\newcommand{\Ps}{\smash{{\sP\kern-2.0pt}_q\kern-0.5pt(n)}}
\newcommand{\sPs}{\smash{{\sP\kern-1.5pt}_q(n)}}
\newcommand{\Ptwo}{\smash{{\sP\kern-2.0pt}_2\kern-0.5pt(n)}}
\newcommand{\Ptwom}{\smash{{\sP\kern-2.0pt}_2\kern-0.5pt(m)}}
\newcommand{\Ptwonm}{\smash{{\sP\kern-2.0pt}_2\kern-0.5pt(n+m)}}
\newcommand{\Ptwoa}{\smash{{\sP\kern-2.0pt}_2\kern-0.5pt(1)}}
\newcommand{\Ptwob}{\smash{{\sP\kern-2.0pt}_2\kern-0.5pt(2)}}
\newcommand{\Ptwoc}{\smash{{\sP\kern-2.0pt}_2\kern-0.5pt(3)}}
\newcommand{\Ptwod}{\smash{{\sP\kern-2.0pt}_2\kern-0.5pt(4)}}
\newcommand{\Ptwoe}{\smash{{\sP\kern-2.0pt}_2\kern-0.5pt(5)}}
\newcommand{\Ptwof}{\smash{{\sP\kern-2.0pt}_2\kern-0.5pt(6)}}
\newcommand{\Ptwokm}{\smash{{\sP\kern-2.0pt}_2\kern-0.5pt(2k-1)}}
\newcommand{\Pone}{\smash{{\sP\kern-2.5pt}_2\kern-0.5pt(n{-}1)}}
\newcommand{\Gr}{\smash{{\sG\kern-1.5pt}_q\kern-0.5pt(n,k)}}
\newcommand{\Gi}{\smash{{\sG\kern-1.5pt}_q\kern-0.5pt(n,i)}}
\newcommand{\Gj}{\smash{{\sG\kern-1.5pt}_q\kern-0.5pt(n,j)}}
\newcommand{\Grmk}{\smash{{\sG\kern-1.5pt}_q\kern-0.5pt(n,n-k)}}
\newcommand{\Grdk}{\smash{{\sG\kern-1.5pt}_q\kern-0.5pt(2k,k)}}
\newcommand{\Grekappa}{\smash{{\sG\kern-1.5pt}_q\kern-0.5pt(n,e+1-\kappa)}}
\newcommand{\Grtwoekappa}{\smash{{\sG\kern-1.5pt}_q\kern-0.5pt(n,2e+1-\kappa)}}
\newcommand{\Gremkappa}{\smash{{\sG\kern-1.5pt}_q\kern-0.5pt(n,e-\kappa)}}
\newcommand{\Gn}{\smash{{\sG\kern-1.5pt}_2\kern-0.5pt(n,n{-}1)}}
\newcommand{\Gnq}{\smash{{\sG\kern-1.5pt}_q\kern-0.5pt(n,n{-}1)}}
\newcommand{\Gone}{\smash{{\sG\kern-1.5pt}_2\kern-0.5pt(n,1)}}
\newcommand{\Gqone}{\smash{{\sG\kern-1.5pt}_q\kern-0.5pt(n,1)}}
\newcommand{\GTwo}{\smash{{\sG\kern-1.5pt}_2\kern-0.5pt(n,k)}}
\newcommand{\GTwonk}[2]{{\smash{{\sG\kern-1.5pt}_2\kern-0.5pt({#1},{#2})}}}
\newcommand{\Gnk}{\smash{{\sG\kern-1.5pt}_2\kern-0.5pt(n,n{-}k)}}
\newcommand{\Greone}{\smash{{\sG\kern-1.5pt}_q\kern-0.5pt(n,e{+}1)}}
\newcommand{\Gretwo}{\smash{{\sG\kern-1.5pt}_q\kern-0.5pt(n,e{+}2)}}
\newcommand{\be}[1]{\begin{equation}\label{#1}}
\newcommand{\ee}{\end{equation}}
\newcommand{\BDC}{\mathsf{BDC}}
\newcommand{\Cref}[1]{Co\-rol\-la\-ry\,\ref{#1}}
\newtheorem{lemma}{Lemma}
\newtheorem{corollary}{Corollary}
\newtheorem{definition}{Definition}
\newtheorem{proposition}{Proposition}
\newtheorem{problem}{Problem}
\newcommand{\SCS}{\mathsf{SCS}}
\newcommand{\LCS}{\mathsf{LCS}}
\begin{document}

\author{\IEEEauthorblockN{ \textbf{Shubhransh~Singhvi}$^{1*}$, \textbf{Charchit~Gupta}$^{2*}$, \textbf{Avital Boruchovsky}$^3$, \textbf{Yuval Goldberg}$^3$,\\ 
\textbf{Han~Mao~Kiah}$^4$ and \textbf{Eitan~Yaakobi}$^3$}
  \IEEEauthorblockA{$^1$%
  Signal Processing  \&  Communications Research  Center, IIIT Hyderabad, India}
\IEEEauthorblockA{$^2$%
                      IIIT Hyderabad, India}
\IEEEauthorblockA{$^3$%
                     Department of Computer Science, 
                     Technion---Israel Institute of Technology, 
                     Haifa 3200003, Israel}
  \IEEEauthorblockA{$^4$%
                     School of Physical and Mathematical Sciences, 
		Nanyang Technological University, Singapore}
 }
 
\title{\textbf{Permutation Recovery Problem against Deletion Errors for DNA Data Storage }}
\date{\today}
\maketitle
\thispagestyle{empty}	
\pagestyle{empty}


\hspace*{-3mm}\begin{abstract}
Owing to its immense storage density and durability, DNA has emerged as a promising storage medium.  However, due to technological
constraints, data can only be written onto many short DNA molecules called \textit{data blocks} that are stored in an unordered way. To handle the unordered nature of DNA data storage systems, a unique \textit{address} is typically prepended to each data block to form a \textit{DNA strand}. However, DNA storage systems are prone to errors and generate multiple noisy copies of each strand called \textit{DNA reads}. Thus, we study the \emph{permutation recovery problem} against deletions errors for DNA data storage.
    
The permutation recovery problem for DNA data storage requires one to reconstruct the addresses or in other words to uniquely identify the noisy reads. By successfully reconstructing the addresses, one can essentially determine the correct order of the data blocks, effectively solving the clustering problem.

We first show that we can almost surely identify all the noisy reads under certain mild assumptions. We then propose a permutation recovery procedure and analyze its complexity.  

\end{abstract}
\blfootnote{$^*$Equal contribution
}
\section{Introduction}
The need for a more durable and compact storage system has become increasingly evident with the explosion of data in modern times. While magnetic and optical disks have been the primary solutions for storing large amounts of data, they still face limitations in terms of storage density and physical space requirements. Storing a zettabyte of data using these traditional technologies would necessitate a vast number of units and considerable physical space.

The idea of using macromolecules for ultra-dense storage systems was recognized as early as the 1960s when the physicist Richard Feynman outlined his vision for nanotechnology in his talk `There is plenty of room at the bottom'. Using DNA is an attractive possibility because it is extremely dense (up to about 1 exabyte per cubic millimeter) and durable (half-life of over 500 years).  Since the first experiments conducted by Church et al. in 2012~\cite{Church.etal:2012} and Goldman et al. in 2013~\cite{Goldman.etal:2013}, there have been a flurry of experimental demonstrations (see \cite{Shomorony.2022,Yazdi.etal:2015b} for a
survey). Amongst the various coding design considerations, in this work, we study the unsorted nature of the DNA storage system~\cite{LSWY18, Shomorony.2022}.

A DNA storage system consists of three important components. The first is the DNA synthesis which produces the oligonucleotides, also called \textit{strands}, that encode the data. The second part is a storage container with compartments  which  stores the  DNA  strands, however without order. Finally, to retrieve the data, the DNA is accessed using next-generation sequencing, which results in several noisy copies, called \textit{reads}. The processes of synthesizing, storing, sequencing, and handling strands are all error prone. Due to this unordered nature of DNA-based storage systems, when the user retrieves the information, in addition to decoding the data, the user has to determine the identity of the data stored in each strand. A typical solution is to simply have a set of addresses and store this address information as a prefix to each DNA strand. As the addresses are also known to the user, the user can identify the information after the decoding process. As these addresses along with the stored data are prone to errors, this solution needs further refinements.

In \cite{Organick}, the strands (strand = address + data) are first clustered with respect to the edit distance. Then the authors determine a consensus output amongst the strands in each cluster and  finally, decode these consensus outputs using a classic concatenation scheme.  
For this approach,  the clustering step is computationally expensive.
When there are $\cM$ reads, the usual clustering method involves $\cM^2$ pairwise comparisons to compute distances. This is costly when the data strands are long, and 
the problem is further exacerbated if the metric is the edit distance.
Therefore, in  \cite{clustering}, a distributed approximate clustering algorithm was proposed and the authors clustered 5 billion strands in 46 minutes on 24 processors.

In \cite{CKVY23}, the authors proposed and investigated an approach that avoids clustering, by studying a generalisation of the \textit{bee identification problem}. Informally, the bee identification problem requires the receiver to identify $M$ “bees” using a set of $M$ unordered noisy measurements~\cite{TTV2019}. 
Later, in~\cite{CKVY23}, the authors generalized the setup to multi-draw channels where every bee (address) results in $N$ noisy outputs (noisy addresses). The task then is to identify each of the $M$ bees from the $MN$ noisy outputs and it turns out that this task can be reduced to a minimum-cost network flow problem. In contrast to previous works, the approach in~\cite{CKVY23} utilizes only the address information, which is of significantly shorter length, and the method does not take into account the associated noisy data. Hence, this approach involves no data comparisons.

However, as evident, the clustering and bee identification based approaches do not completely take into account the nature of the DNA storage system. In particular, the clustering approaches do not utilise the uncorrupted set of addresses which can be accessed by the receiver and the bee identification approach uses solely the information stored in address and neglects the noisy data strands.   

In \cite{SBKY23}, the authors devised an approach that utilizes both the address and data information to identify the noisy reads. However, their approach was designed for the binary erasure channel, i.e., when the reads are corrupted by erasures.  

In this paper, we consider the more challenging noise model of deletions; a more realistic noise model for DNA data storage. Specifically, for the binary deletion channel, we first show that we can almost surely correctly identify all the reads under certain mild assumptions. Then we propose our permutation recovery procedure and demonstrate that on average the procedure uses only a fraction of $\cM^2$ data comparisons (when there are $\cM$ reads).

\section {Problem Formulation}
\label{sec:prob}
Let $N$ and $M$ be positive integers. Let $[M]$ denote the set $\{1,2,\ldots,M\}$. An $N$-permutation $\pi$ over $[M]$ is an $NM$-tuple $(\pi(j))_{j\in[MN]}$ where every symbol in $[M]$ appears exactly $N$ times, and we denote the set of all $N$-permutations over $[M]$ by $\mathbb{S}_N(M)$.
Let the set of addresses be denoted by $\cA \subseteq \{0,1\}^n$ and $M \triangleq |\cA|$. We will use the terms addresses and codewords interchangeably. We assume that every codeword $\bfx_i \in \cA$ is attached to a length-$L$ data part $\bfd_i\in {\{0,1\}}^L$ to form a strand, which is the tuple, $(\bfx_i,\bfd_i)$. Let the multiset of  data be denoted by $D = \{\{\bfd_i:i\in[M]\}\}$ and the set of strands by $R=\{(\bfx_i,\bfd_i):i\in[M]\}$. Throughout this paper, we assume that $D$ is drawn uniformly at random over $\{0,1\}^L$. Let $\cS_N((\bfx,\bfd))$ denote the multiset of channel outputs when $(\bfx,\bfd)$ is transmitted $N$ times through the channel $\cS$. Assume that the entire set $R$ is transmitted through the channel $\cS$, hence an unordered multiset, $R'=\{\{(\bfx_1',\bfd_1'),(\bfx_2',\bfd_2'),\ldots,(\bfx'_{MN},\bfd_{MN}')\}\}$, of $MN$ noisy strands (reads)  is obtained, where for every $j\in[MN]$, $(\bfx_j', \bfd_j') \in \cS_N((\bfx_{\pi(j)},\bfd_{\pi(j)}))$ for some  $N$-permutation $\pi$ over $[M]$, which will be referred to as the \textit{true $N$-permutation}. Note that the receiver, apart from the set of reads $R'$, has access to the set of addresses $\cA$ but does not know the set of data $D$. Let $\Delta > 0$ and $L = \Delta n$. 

For an integer $k$, $0\le k\le n$, a sequence $\bfy\in\{0,1\}^{n-k}$ is a  \emph{$k$-subsequence} of $\bfx\in\{0,1\}^n$ if $\bfy$ can be obtained by deleting $k$ symbols from $\bfx$. Similarly, a sequence $\bfy\in\{0,1\}^{n+k}$ is a  \emph{$k$-supersequence} of $\bfx\in\{0,1\}^n$ if $\bfx$ is a $k$-subsequence of~$\bfy$. Let $\bfx$ and $\bfy$ be two sequences of length $n$ and $m$ respectively such that $m < n$. The \emph{embedding number} of $\bfy$ in $\bfx$, denoted by $\omega_{\bfy}(\bfx)$, is defined as the number of distinct occurrences of $\bfy$ as a subsequence of $\bfx$. More formally, the embedding number is the number of distinct index sets, $(i_1, i_2, \ldots, i_{m})$, such that $1 \le i_1 < i_2 < \cdots < i_{m} \le n$ and $x_{i_1} = y_1, x_{i_2} = y_2, \dotsc, x_{i_{m}} = y_{m}$. For example, for $\bfx = \texttt{11220}$ and $\bfy = \texttt{120}$, it holds $\omega_{\bfy}(\bfx) = 4$. The \textit{$k$-insertion ball} centred at ${\bfx\in\{0,1\}^n}$, denoted by $I_k(\bfx)\subseteq \{0,1\}^{n+k}$, is the set of all $k$-supersequences of $\bfx$. Similarly, the {\textit{$k$-deletion ball}} centred at ${\bfx\in\{0,1\}^n}$, denoted by $D_k(\bfx)\subseteq \{0,1\}^{n-k}$, is the set of all $k$-subsequences of~$\bfx$. Let $\bfx, \bfy \in \{0,1\}^*$, we denote the shortest common supersequence between $\bfx,\bfy$ by $\SCS(\bfx,\bfy)$ and the longest common subsequence by $\LCS(\bfx,\bfy)$. 



In \cite{SBKY23}, the authors introduced the following problems:

\begin{problem}
\label{prob1}
Given $\cS, \epsilon$ and $\cA$, find the region $\cR\in \mathbb{Z}_{+}^2$, such that for $(N,L) \in \cR$,  it is possible to identify the true permutation with probability at least $1-\epsilon$ when the data $D$ is drawn uniformly at random. 
\end{problem}


\begin{problem}
\label{prob2}
   Let $\kappa<1$. Given $\cS, \epsilon, \cA$ and $(N,L)\in \cR$, design an algorithm to identify the true permutation with probability at least $1-\epsilon$ using $\kappa (NM)^2$ data comparisons. As before, $D$ is  drawn uniformly at random. 
\end{problem}

In Section~\ref{sec:multi-draw}, we demonstrate that the algorithm in \cite{CKVY23} identifies the true permutation with a vanishing error probability as $n$ grows. In Section~\ref{sec:uniq_n}, we address Problem~\ref{prob1} and identify the region $\cR$ for which there exists only one valid permutation, viz. the true permutation. In Section~\ref{sec:perm_alg}, we describe our algorithm that identifies the true-permutation with probability at least $1-\epsilon$ when $(N,L) \in \cR$ and also analyse the expected number of data comparisons performed by the algorithm. Unless otherwise mentioned, we consider the channel to be the binary deletion channel, i.e., $\cS =\mathsf{BDC}(p)$, where $p$ denotes the deletion probability.

Due to space limitations, we give some of the proofs in the appendix.

\section{Bee-Identification over Multi-Draw Deletion Channels}
\label{sec:multi-draw}
The algorithm in \cite{CKVY23} uses solely the information stored in the addresses to identify the true-permutation, and does not take into consideration the noisy data that is also available to the receiver.  The first step in their algorithm is to construct a bipartite graph $\cG = (\cX\cup \cY, E)$ as follows:

\begin{enumerate}
    \item Nodes: the left nodes are the addresses ($\cX = \cA$) and the right nodes are the noisy reads ($\cY = R'$).
    \item Demands: for each left node $\bfx\in \cX$, we assign a demand $\boldsymbol{\delta}(\bfx) =-N$, while for each right node $(\bfy,\bfd')\in \cY$, we assign a demand $\boldsymbol{\delta}(\bfy) = 1$.
    \item Edges: there exists an edge between $\bfx\in \cX$ and $(\bfy,\bfd')\in \cY$ in $E$ if and only if $P(\bfy|\bfx)>0$, where $P(\bfy|\bfx)$ is the likelihood probability of observing $\bfy$ given that $\bfx$ was transmitted. 
    \item  Costs: for an edge $(\bfx, (\bfy,\bfd')) \in E$, we assign the cost $\gamma((\bfx, (\bfy,\bfd'))) =-\log P(\bfy \mid \bfx)$.
\end{enumerate}

For $\bfx\in \cX$ and $(\bfy,\bfd')\in\cY$, let $E_\bfx$ and $E_{(\bfy,\bfd')}$ denote the multiset of neighbours of $\bfx$ and the set of neighbours of $(\bfy,\bfd')$ in $\cG$, respectively, i.e.,  $E_\bfx=\{\{(\bfy,\bfd')|(\bfx,(\bfy,\bfd'))\in E\}\}$, $E_{(\bfy,\bfd')}=\{\bfx|(\bfx,(\bfy,\bfd'))\in E\}$. Note that the degree of every left node is at least $N$ as $\cS_N((\bfx,\bfd)) \subseteq{} E_\bfx$.

The likelihood of an $N$-permutation $\pi$ can be computed to be $ \prod_{(\bfy,\bfd') \in \cY} P(\bfy | \pi(\bfy))$. Further, it was shown that the task of finding the permutation that maximizes this probability can be reduced to the task of finding a minimum-cost matching on the graph $\cG$. 

However, since this approach neglects the information stored in the data, it is not necessary for the permutation recovered by the minimum-cost matching to be the true permutation. When the set of addresses is the entire space, i.e., $M = 2^n$, we show in the following lemma that this algorithm identifies the true permutation with a vanishing probability as $n$ grows.

\begin{restatable}{lemma}{PMultiDraw}\label{lem:PMultiDraw}
Let $P_0$ denote the probability that the minimum-cost matching does not recover the true permutation. Then for $M = 2^n$, 
   \begin{align*}
       P_0 \geq  1-\left(1-p^{6}(1-p)\right)^{n-3}.
   \end{align*}
\end{restatable}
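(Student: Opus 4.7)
The plan is to exhibit $n-3$ mutually independent ``confusability events'' $E_1, E_2, \ldots, E_{n-3}$, each of probability at least $p^6(1-p)$, each of which on its own causes the minimum-cost matching on $\cG$ to miss the true $N$-permutation. Combining these via independence yields
\begin{align*}
P_0 \;\geq\; \Pr\biggl[\bigcup_{i=1}^{n-3} E_i\biggr] \;=\; 1 - \prod_{i=1}^{n-3}\bigl(1 - \Pr[E_i]\bigr) \;\geq\; 1 - \bigl(1 - p^6(1-p)\bigr)^{n-3}.
\end{align*}

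For each $i \in \{1, \ldots, n-3\}$, I dedicate two transmitted strand copies whose underlying addresses $\bfa^{(i)}, \bfb^{(i)} \in \{0,1\}^n$ agree outside the local 4-bit window $\{i, i+1, i+2, i+3\}$ and disagree within the first three positions of this window (e.g., $\bfa^{(i)}$ has bits $000$ at positions $i, i+1, i+2$ while $\bfb^{(i)}$ has $111$ there, and both agree at position $i+3$). Such address pairs exist because $M = 2^n$ makes every binary string an address, and disjoint strand copies can be allocated across different values of $i$ since $2(n-3) \leq MN$. The event $E_i$ prescribes exactly 7 channel-level actions on these two dedicated copies: three deletions at positions $i, i+1, i+2$ in the $\bfa^{(i)}$-copy, three deletions at positions $i, i+1, i+2$ in the $\bfb^{(i)}$-copy, and one non-deletion at position $i+3$ in one of the two copies. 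By independence of channel actions at distinct (position, copy) pairs, $\Pr[E_i] = p^3 \cdot p^3 \cdot (1-p) = p^6(1-p)$; moreover, the $E_i$'s are mutually independent since they act on disjoint sets of (position, copy) actions.

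The crux is to verify that each $E_i$ forces the matching to err. Conditional on $E_i$, the three disagreeing bits are erased in both copies, so the two reads emerge from identical residual content of the two addresses, while the surviving bit at position $i+3$ serves as a common anchor. Using the embedding-number formula $P(\bfy \mid \bfx) = \omega_{\bfy}(\bfx) \cdot p^{n-|\bfy|}(1-p)^{|\bfy|}$, a careful calculation shows that swapping these two reads in the matching weakly lowers the total cost; concretely, the likelihood products satisfy
\begin{align*}
\omega_{\bfy^{(1)}}(\bfa^{(i)})\,\omega_{\bfy^{(2)}}(\bfb^{(i)}) \;\leq\; \omega_{\bfy^{(1)}}(\bfb^{(i)})\,\omega_{\bfy^{(2)}}(\bfa^{(i)}).
\end{align*}
Consequently, the true $N$-permutation is not the unique minimum-cost matching, and the recovery fails. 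The main obstacle is establishing this embedding-number inequality conditional on $E_i$; I expect the cleanest approach exploits the symmetry between $\bfa^{(i)}$ and $\bfb^{(i)}$ after projecting out the three differing positions, with the anchor bit at $i+3$ ensuring that embeddings of $\bfy^{(1)}$ and $\bfy^{(2)}$ align coherently across both addresses, thereby converting a conditional symmetry of read distributions into the desired inequality on embedding counts.
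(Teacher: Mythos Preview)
Your high-level strategy---exhibit $n-3$ mutually independent bad events, each of probability at least $p^6(1-p)$, each forcing the minimum-cost matching to err---matches the paper's. The gap is in your construction of $E_i$ and the claim that $E_i$ forces the swap inequality.

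Your event $E_i$ is defined at the \emph{channel-action} level: it pins down seven deletion/retention bits and leaves all other channel actions random. Consequently, conditional on $E_i$, the reads $\bfy^{(1)},\bfy^{(2)}$ are still random, and the embedding-number comparison can go either way. The very symmetry you invoke is the problem: after removing the three differing positions, $\bfa^{(i)}$ and $\bfb^{(i)}$ coincide, so $(\bfy^{(1)},\bfy^{(2)})$ are exchangeable, which means $\omega_{\bfy^{(1)}}(\bfa)\,\omega_{\bfy^{(2)}}(\bfb)$ and $\omega_{\bfy^{(1)}}(\bfb)\,\omega_{\bfy^{(2)}}(\bfa)$ have the \emph{same distribution}, not a one-sided inequality. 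A concrete counterexample with $n=5$, $i=1$: take $\bfa=00000$, $\bfb=11100$ (differ at positions $1,2,3$, agree at $4,5$). Conditional on $E_1$, one possible realization is $\bfy^{(1)}=00$, $\bfy^{(2)}=0$; then
\[
\omega_{\bfy^{(1)}}(\bfa)\,\omega_{\bfy^{(2)}}(\bfb)=10\cdot 2=20
\quad>\quad
\omega_{\bfy^{(1)}}(\bfb)\,\omega_{\bfy^{(2)}}(\bfa)=1\cdot 5=5,
\]
so the true assignment is \emph{strictly} preferred and the matching does not fail. Hence $E_i$ alone does not imply failure.

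The paper avoids this by defining the bad events at the \emph{output} level rather than the channel-action level. For each $m\in\{0,\dots,n-4\}$ it uses the concrete pair $\bfx^{(m)}=0^m10^{n-m-3}11$, $\tx^{(m)}=0^m10^{n-m-4}100$ and the output families $\bfy_k=0^k$, $\tilde\bfy_{k'}=0^{k'-1}1$. For every $k,k'$ one verifies by direct binomial counts that $\omega_{\bfy_k}(\tx)>\omega_{\bfy_k}(\bfx)$ and $\omega_{\tilde\bfy_{k'}}(\bfx)>\omega_{\tilde\bfy_{k'}}(\tx)$, so the swap is \emph{strictly} better whenever the outputs land in these families. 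The probability of landing there is then lower-bounded by $p^3\cdot p^3(1-p)=p^6(1-p)$ per pair, and the $n-3$ disjoint address pairs give the independent trials. In short: you need to pin down the output pattern and check the inequality explicitly, rather than pin down a symmetric set of deletions and hope the inequality follows.
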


In the next section, we address Problem~\ref{prob1}.  

\section{Uniqueness of the $N$-permutation}
\label{sec:uniq_n}
The task of identifying the true permutation $\pi$, can be split into two steps. We can first identify the \textit{partitioning} $\{\cS_N((\bfx_i,\bfd_i)):i\in[M]\}$ and then for each \textit{partition} ($\cS_N((\bfx_i,\bfd_i))$) identify the \textit{label}, viz. the channel input ($\bfx_i$), where $i \in [M]$. Hence, given $R' $ and $\cA$, we are able to find the true permutation if and only if there exists only one valid partitioning and one valid labelling.

In Lemmas~\ref{L_th} and~\ref{N_th}, we determine the values $L_{\mathsf{Th}}$ and $N_{\mathsf{Th}}$, respectively, such that for all $L\geq L_\mathsf{Th}$ and $N\geq N_{\mathsf{Th}}$, we are able to find the true permutation with high probability. The result is formally stated in Theorem~\ref{thm:prob1}.

Before formally defining partitioning and labelling, we introduce some notations.

\begin{definition}
    For,  $\bfa,\bfb\in\{0,1\}^{n}$, let $\bfa',\bfb'$ be the channel outputs through the $\BDC(p)$ of $\bfa,\bfb$, respectively. We say that $\bfa'$ and $\bfb'$ are confusable, denoted by $\bfa' \cong \bfb'$, if $|{\SCS}(\bfa',\bfb')| \leq n$. Furthermore, let $\beta_{p}(\bfa,\bfb)$ denote the probability that $\bfa'$ and $\bfb'$ are confusable.  
\end{definition}

In Algorithm~\ref{Alg:Compute_Beta}, we describe how to compute $\beta_{p}(\bfa,\bfb)$. 

\begin{algorithm}[]
\caption{Total Probability of Confusable Events}
\label{Alg:Compute_Beta}
\begin{algorithmic}[1]
\Procedure{Compute Beta}{$p, \bfa,\bfb$}
\State $\beta_{p}(\bfa,\bfb) = 0, k_1 = 0, k_2 = 0$
\While{$k_1 \leq n$}
    \For{$\bfa' \in D_{k_1}(\bfa)$}
        \While{$k_2 \leq n$}
            \For{$\bfb' \in D_{k_2}(\bfb)$}
                \If{$\vert\SCS(\bfa',\bfb')\vert \leq n$}
                \State Let $k = k_1 + k_2$
                \State $\beta_{p}(\bfa,\bfb) \overset{+}{=} \omega_{\bfa'}(\bfa)\cdot\omega_{\bfb'}(\bfb) \cdot p^{k} \cdot (1-p)^{2n -k}$
                \EndIf
            \EndFor
            $k_2 \overset{+}{=} 1$
        \EndWhile
    \EndFor
    $k_1 \overset{+}{=} 1$
\EndWhile
\Return $\beta_{p}(\bfa,\bfb)$
\EndProcedure
\vspace{0.2cm}
\end{algorithmic}
\end{algorithm}

Let $(\bfx,\bfd), (\tx,\Tilde{\bfd})\in R$ then the read $(\bfy, \bfd')\in\cS_N((\bfx,\bfd))$ is said to be confusable if there exists some other read $(\Tilde{\bfy},\Tilde{\bfd'})\in R' /\{\cS_N((\bfy,\bfd'))\}$  such that $(\Tilde{\bfy},\Tilde{\bfd'})\in\cS_N((\Tilde{\bfx},\Tilde{\bfd}))$, $\bfy \cong \Tilde{\bfy}$ and $\bfd' \cong \Tilde{\bfd'}$, where $(\tx,\Tilde{\bfd}) \in R/\{(\bfx,\bfd)\}$. Let $\mathsf{R_{conf}}$ denote the multiset of such confusable reads.

In the next lemma, we describe a result on the length of the longest common subsequence of two uniformly chosen binary sequences, which would be crucial in analyzing the probability of a read being confusable.

\begin{lemma} \cite{L09}
\label{lem:Xkl}
    Let $X_{k,\ell}$ denote the length of the longest common subsequence of two uniformly chosen binary strings of length $k$ and $\ell$, respectively and $\lambda>0$, then 
    \begin{align}
        \mathbb{E}[X_{k,\ell}] \leq \frac{\gamma_2 (k+\ell)}{2} \leq \gamma_2 \max\{k,\ell\},
    \end{align}
    where $0.788 \leq \gamma_2 \leq 0.8263$. Further,
    \begin{align}
        P\left(\left|X_{k,\ell}-\mathbb{E}\left(X_{k,\ell}\right)\right| \geq \lambda\right) \leq 2 e^{-\frac{\lambda^{2}}{2(k+\ell
)}} \leq 2 e^{-\frac{\lambda^{2}}{4\max\{k,\ell\}}}.
    \end{align}
\end{lemma}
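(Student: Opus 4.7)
The plan is to establish the expectation bound using the theory of the Chvátal--Sankoff constant together with Fekete's lemma, and to obtain the concentration inequality by an application of McDiarmid's bounded differences inequality.

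First I would set $a_n = \mathbb{E}[X_{n,n}]$ and observe that $(a_n)$ is superadditive: concatenating two independent pairs of i.i.d.\ uniform strings of lengths $m$ and $n$ yields length-$(m+n)$ strings whose LCS is at least the sum of the two individual LCS lengths, so $a_{m+n} \geq a_m + a_n$. Fekete's lemma for superadditive sequences then gives $a_n/n \to \gamma_2 := \sup_n a_n/n$, and in particular $a_n \leq \gamma_2 n$ for every $n$. The numerical bounds $0.788 \leq \gamma_2 \leq 0.8263$ I would quote as a black box from Lueker's analysis; they require a substantial finite-state/dynamic-programming argument specific to the binary alphabet.

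To extend the expectation bound to unequal lengths, I would apply the same superadditivity trick: split two independent length-$(k+\ell)$ uniform strings into matched pieces of lengths $k$ and $\ell$ and pair them in two ways, obtaining
\[
\mathbb{E}[X_{k+\ell,k+\ell}] \;\geq\; \mathbb{E}[X_{k,\ell}] + \mathbb{E}[X_{\ell,k}] \;=\; 2\,\mathbb{E}[X_{k,\ell}]
\]
by symmetry. Combined with $\mathbb{E}[X_{k+\ell,k+\ell}] \leq \gamma_2(k+\ell)$ from the previous step, this yields $\mathbb{E}[X_{k,\ell}] \leq \gamma_2(k+\ell)/2$, and the second inequality $\leq \gamma_2\max\{k,\ell\}$ is immediate from $k+\ell \leq 2\max\{k,\ell\}$.

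For the concentration, I would view $X_{k,\ell}$ as a deterministic function of the $k+\ell$ independent uniform bits that form the two strings. Flipping any single bit changes $X_{k,\ell}$ by at most one, because removing the single affected position (if it was used) from the previously optimal common subsequence still yields a valid common subsequence after the flip. McDiarmid's bounded differences inequality with constants $c_i = 1$ therefore produces a subgaussian tail of the form $2\exp(-c\lambda^2/(k+\ell))$; the particular constant $c=1/2$ stated in the lemma is the (slightly lossy) version recorded in the cited source. The second form $2e^{-\lambda^2/(4\max\{k,\ell\})}$ again follows from $k+\ell \leq 2\max\{k,\ell\}$. The main obstacle is the quantitative estimate on $\gamma_2$ itself, particularly the upper bound $\gamma_2 \leq 0.8263$, which relies on Lueker's intricate argument; by contrast, existence of the limit, the extension to unequal lengths, and the McDiarmid concentration are all routine.
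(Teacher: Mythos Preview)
The paper does not supply a proof of this lemma at all: it is stated with the citation \cite{L09} and used as a black box. Your reconstruction is correct and follows the standard route---superadditivity/Fekete for the existence of $\gamma_2$ and the bound $\mathbb{E}[X_{n,n}]\le\gamma_2 n$, the splitting trick $\mathbb{E}[X_{k+\ell,k+\ell}]\ge 2\,\mathbb{E}[X_{k,\ell}]$ for the unequal-length extension, Lueker's analysis for the numerical range of $\gamma_2$, and McDiarmid with unit differences for the tail (indeed McDiarmid gives the sharper constant $2$ in the exponent, so the stated $1/2$ is conservative). There is nothing to compare against in the paper itself.
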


\begin{definition}
    For $c>0$, let $A^*_{c}$ denote the event that all noisy reads have length between $[(1-p)L-cL,(1-p)L+cL]$. 
\end{definition}

\begin{restatable}{lemma}{probAcL}\label{lem:probAcL}
The probability of the event $A^*_{c}$ is at least 
\begin{align*}
    P(A^*_{c}) \geq  \left(1-2e^{\left(-2c^{2}L\right)}\right)^{N2^n},
\end{align*}
where $c > 0$. 
\end{restatable}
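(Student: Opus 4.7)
The plan is to apply a Chernoff-type concentration bound to each noisy read individually and then combine the bounds using independence across reads. Observe first that the target interval $[(1-p)L - cL,\, (1-p)L + cL]$ is centered at $(1-p)L$, so the statement is really about the length of the data portion $\bfd'$ of each noisy read rather than the combined strand $(\bfx',\bfd')$.

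First I would analyze a single transmission. Under $\BDC(p)$ a data block $\bfd \in \{0,1\}^L$ is mapped to $\bfd'$ of length $L - D$, where $D$ counts the number of deleted symbols and is distributed as $\mathrm{Binomial}(L,p)$, so $\mathbb{E}[D] = pL$. Hoeffding's inequality applied to $D$, viewed as a sum of $L$ i.i.d.\ Bernoulli$(p)$ random variables bounded in $[0,1]$, then yields
\[
P\bigl(|D - pL| \geq cL\bigr) \;\leq\; 2\,e^{-2c^{2}L},
\]
which is precisely the statement that $|\bfd'|$ lies in the desired interval with probability at least $1 - 2 e^{-2 c^{2} L}$.

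Next, since the $\BDC(p)$ acts independently on each of the $NM$ channel uses (the $N$ transmissions of each of the $M$ strands are mutually independent), the event $A^*_c$ is the intersection of $NM$ mutually independent events, each of probability at least $1 - 2 e^{-2 c^{2} L}$. Hence
\[
P(A^*_c) \;\geq\; \bigl(1 - 2 e^{-2 c^{2} L}\bigr)^{NM}.
\]
Finally, because $\cA \subseteq \{0,1\}^n$ forces $M \leq 2^n$, and because $1 - 2 e^{-2 c^{2} L} \in (0,1)$ whenever $L$ is large enough that $2 e^{-2c^{2}L} < 1$, raising this quantity to a larger exponent only decreases its value, so $\bigl(1 - 2 e^{-2 c^{2} L}\bigr)^{NM} \geq \bigl(1 - 2 e^{-2 c^{2} L}\bigr)^{N 2^{n}}$, which is exactly the claimed bound.

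The only quantitative ingredient is Hoeffding's inequality, so no real obstacle is anticipated; the remaining steps are just independence of the channel uses and the trivial cardinality bound $M \leq 2^n$. I would remark that the bound is loose in the sense that replacing $M$ by $2^n$ is wasteful when $\cA$ is much smaller than the full space, and a sharper version $P(A^*_c) \geq (1 - 2 e^{-2 c^{2} L})^{NM}$ is immediate from the same argument.
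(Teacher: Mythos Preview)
Your proposal is correct and follows essentially the same approach as the paper's proof: apply Hoeffding's inequality to the length of a single noisy data block, then use independence across the $NM$ reads. The paper is terser and writes $N2^n$ directly (implicitly taking $M=2^n$), whereas you make the $M\le 2^n$ monotonicity step explicit, but the argument is otherwise identical.
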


Recall that $L = \Delta n$.

\begin{restatable}{corollary}{ProbAcLB}\label{cor:ProbAcLB}
For $c \geq \sqrt{\frac{2}{\Delta}}$, we have that  
$P(A^*_{c}) \geq 1 -\frac{2N}{2^n}$. 
\end{restatable}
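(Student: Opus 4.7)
The plan is to derive the corollary directly from Lemma~\ref{lem:probAcL} using the Bernoulli inequality together with the substitution $L = \Delta n$.

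First I would apply Bernoulli's inequality $(1-x)^m \geq 1 - mx$ (valid for $x \in [0,1]$ and $m \geq 1$) to the bound of Lemma~\ref{lem:probAcL}, obtaining
\begin{align*}
P(A^*_{c}) \;\geq\; \left(1 - 2e^{-2c^2 L}\right)^{N 2^n} \;\geq\; 1 - 2N\cdot 2^{n} \cdot e^{-2c^2 L}.
\end{align*}

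Next I would substitute $L = \Delta n$ and exploit the hypothesis $c^2 \geq 2/\Delta$, which yields $2c^2 L \geq 4n$. Since $4 > 2\ln 2$, this gives $2c^2 L \geq 2n \ln 2$, hence $e^{-2c^2 L} \leq 2^{-2n}$. Plugging back,
\begin{align*}
P(A^*_c) \;\geq\; 1 - 2N \cdot 2^n \cdot 2^{-2n} \;=\; 1 - \frac{2N}{2^n},
\end{align*}
which is the desired conclusion.

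The argument is essentially a two-line calculation, so there is no real obstacle; the only minor technical point is checking that $2e^{-2c^2 L} \in [0,1]$ so that Bernoulli's inequality applies, which follows immediately from $2c^2 L \geq 4n \geq \ln 2$ for any $n \geq 1$. Thus the corollary follows with only elementary manipulations on top of Lemma~\ref{lem:probAcL}.
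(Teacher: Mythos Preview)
Your proof is correct. Both arguments rely on Lemma~\ref{lem:probAcL}, the substitution $L=\Delta n$, and Bernoulli's inequality, but they are organized differently. The paper works \emph{backwards}: it fixes a target failure probability $\epsilon_n$, solves for the threshold on $c\sqrt{L}$ needed so that $(1-2e^{-2c^2L})^{N2^n}\ge 1-\epsilon_n$, applies Bernoulli in the form $(1-\epsilon_n)^{1/(N2^n)}\le 1-\epsilon_n/(N2^n)$ together with $\sqrt{a+b}<\sqrt a+\sqrt b$ to simplify that threshold, and finally specializes to $\epsilon_n=2N/2^n$ to recover $c\ge\sqrt{2/\Delta}$. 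You instead work \emph{forwards}: apply Bernoulli directly to the $N2^n$-th power in Lemma~\ref{lem:probAcL} and then bound $e^{-2c^2L}\le 2^{-2n}$ using $2c^2L\ge 4n>2n\ln 2$. Your route is shorter and avoids the auxiliary parameter and the square-root splitting; the paper's route has the minor advantage of making explicit, for general $\epsilon_n$, how the required $c$ depends on the target error, which is in the spirit of how $L_{\mathsf{Th}}$ is later derived.
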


In the next lemma, we calculate the probability of a read being confusable.
\begin{restatable}{lemma}{PFaulty}\label{lem:PFaulty}
    Let $c =\sqrt{\frac{2}{\Delta}}, \Delta > 2\left(\dfrac{(\gamma_2 + 2)}{(1-p)(2-\gamma_2)-1}\right)^2$ and $(\bfx,\bfd) \in R$. For $(\bfy,\bfd')\in\cS_N((\bfx,\bfd))$, we have that 
\begin{align*}
P(\mathbb{I}_{(\bfy,\bfd')\in \mathsf{R_{conf}}}) &< 1-\prod_{\bfx' \in \cA/{\bfx}}\left(1-\beta(\bfx,\bfx')2e^{-\theta L}\right)^{N} + \frac{2N}{2^n},
\end{align*}
where $\theta = \frac{\Big((2-\gamma_2)(1-p)-1 - \sqrt{\frac{2}{\Delta}}(2+\gamma_2)\Big)^2}{4}$. 
\end{restatable}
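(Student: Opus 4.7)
The plan is to split $\{(\bfy,\bfd')\in\mathsf{R_{conf}}\}$ according to whether the length-concentration event $A^*_c$ holds. First I would apply the decomposition
$$
P((\bfy,\bfd')\in\mathsf{R_{conf}}) \le P((\bfy,\bfd')\in\mathsf{R_{conf}}\mid A^*_c) + P(\overline{A^*_c}),
$$
and invoke Corollary~\ref{cor:ProbAcLB} with $c=\sqrt{2/\Delta}$ to bound $P(\overline{A^*_c})\le 2N/2^n$. It then suffices to establish the conditional estimate $P((\bfy,\bfd')\in\mathsf{R_{conf}}\mid A^*_c)\le 1-\prod_{\bfx'\in\cA\setminus\{\bfx\}}(1-\beta(\bfx,\bfx')\cdot 2e^{-\theta L})^N$.

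For the per-copy estimate, I would fix an ``other'' strand $(\Tilde{\bfx},\Tilde{\bfd})\in R\setminus\{(\bfx,\bfd)\}$ and one of its $N$ noisy outputs $(\Tilde{\bfy},\Tilde{\bfd}')$. Because $\BDC(p)$ deletes each symbol independently, the noise on the address and on the data are independent, so the probability that this copy is confusable with $(\bfy,\bfd')$ factors as $P(\bfy\cong\Tilde{\bfy})\cdot P(\bfd'\cong\Tilde{\bfd}')$. The address factor equals $\beta(\bfx,\Tilde{\bfx})$ by definition. For the data factor, on $A^*_c$ both $|\bfd'|$ and $|\Tilde{\bfd}'|$ lie in $[(1-p-c)L,(1-p+c)L]$; since $\bfd,\Tilde{\bfd}$ are uniform in $\{0,1\}^L$, conditional on their lengths the two noisy data strings are independent and uniform. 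The event $\bfd'\cong\Tilde{\bfd}'$ amounts to $|\LCS(\bfd',\Tilde{\bfd}')|\ge |\bfd'|+|\Tilde{\bfd}'|-L$, which on $A^*_c$ is implied by $|\LCS|\ge (1-2p-2c)L$. Combining the expectation bound $\mathbb{E}[|\LCS|]\le \gamma_2(1-p+c)L$ of Lemma~\ref{lem:Xkl} with the crude bound $\max\{|\bfd'|,|\Tilde{\bfd}'|\}\le L$ in the concentration tail yields $P(\bfd'\cong\Tilde{\bfd}'\mid A^*_c)\le 2e^{-\theta L}$, where the deviation $\lambda/L=(2-\gamma_2)(1-p)-1-c(2+\gamma_2)$ is positive exactly by the hypothesis on $\Delta$, giving $\theta=(\lambda/L)^2/4$ as stated.

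Finally, to promote this per-copy bound to the product form, I would condition on $(\bfy,\bfd')$. The events ``no copy of $\Tilde{\bfx}$ is confusable with $(\bfy,\bfd')$'' are mutually independent across distinct $\Tilde{\bfx}\in\cA\setminus\{\bfx\}$, since the data and channel noise of different other strands are independent; within a fixed $\Tilde{\bfx}$, the $N$ copies are conditionally i.i.d.\ given the underlying random data $\Tilde{\bfd}$, so a Jensen step on the convex map $x\mapsto(1-x)^N$ produces the $N$-th power. Unconditioning on $(\bfy,\bfd')$ then yields $\prod_{\bfx'}(1-\beta(\bfx,\bfx')\cdot 2e^{-\theta L})^N$, which combined with the first step proves the lemma. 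The main technical obstacle is precisely this last averaging step: the per-copy probability conditional on $(\bfy,\bfd')$ is not literally $\beta(\bfx,\bfx')\cdot 2e^{-\theta L}$ (the latter averages over $(\bfy,\bfd')$ as well), so transferring the bound inside the expectation of the product over $\bfx'$ requires a correlation-type (Jensen/FKG) inequality; the remaining work is routine bookkeeping of the independence of deletions across address and data parts and across copies under the conditioning on $A^*_c$, together with the algebra converting $\lambda$ into $\theta$ via Lemma~\ref{lem:Xkl}.
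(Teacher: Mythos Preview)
Your proposal follows essentially the same route as the paper's proof: split on $A^*_c$, invoke Corollary~\ref{cor:ProbAcLB} for the complement, bound the per-copy data confusability by $2e^{-\theta L}$ via Lemma~\ref{lem:Xkl} (using the length window from $A^*_c$ to control both the expectation $\le \gamma_2(1-p+c)L$ and the denominator $\max\{|\bfd'|,|\Tilde{\bfd}'|\}\le L$ in the concentration tail), multiply by $\beta(\bfx,\bfx')$ for the address part, and assemble into the product over $\bfx'\in\cA\setminus\{\bfx\}$.

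One minor wording slip: the implication ``$\bfd'\cong\Tilde{\bfd}'$ \dots is implied by $|\LCS|\ge (1-2p-2c)L$'' should read the other way around. On $A^*_c$ one has $|\bfd'|+|\Tilde{\bfd}'|-L\ge (1-2p-2c)L$, so confusability \emph{implies} the smaller-threshold LCS event, which is the direction you need for the upper bound; your subsequent algebra uses the correct direction.

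Regarding the product step you flag as the main obstacle: the paper simply writes
\[
P\big((\bfy,\bfd')\in\mathsf{R_{conf}}\mid A^*_c\big)=1-\prod_{(\Tilde{\bfy},\Tilde{\bfd}')\in R'\setminus\cS_N((\bfx,\bfd))}\Big(1-P\big((\bfy,\bfd')\cong(\Tilde{\bfy},\Tilde{\bfd}')\mid A^*_c\big)\Big)
\]
and substitutes the per-copy bound directly, without the conditioning/Jensen layer you propose. Your plan is, if anything, more careful than the paper on this point.
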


The next corollary follows by observing that $\beta(\bfx,\bfx') < 1$ for all $\bfx,\bfx' \in \cA$. 
\begin{corollary}
Let $c =\sqrt{\frac{2}{\Delta}}, \Delta > 2\left(\dfrac{(\gamma_2 + 2)}{(1-p)(2-\gamma_2)-1}\right)^2$ and $(\bfx,\bfd) \in R$. For $(\bfy,\bfd')\in\cS_N((\bfx,\bfd))$, we have that 
\begin{align*}
P(\mathbb{I}_{(\bfy,\bfd')\in \mathsf{R_{conf}}}) < 1-\left(1-2e^{-\theta L}\right)^{N2^n} + \frac{2N}{2^n}, 
\end{align*}
where $\theta = \frac{\Big((2-\gamma_2)(1-p)-1 - \sqrt{\frac{2}{\Delta}}(2+\gamma_2)\Big)^2}{4}$.
\end{corollary}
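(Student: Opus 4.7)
The plan is to obtain the corollary by upper-bounding the expression in Lemma~\ref{lem:PFaulty} so that it no longer depends on the pairwise confusion probabilities $\beta(\bfx,\bfx')$, exactly as the sentence preceding the statement suggests. The basic observation is that every $\beta(\bfx,\bfx')$ is itself a probability, so $0 \leq \beta(\bfx,\bfx') \leq 1$, and therefore replacing $\beta(\bfx,\bfx')$ by $1$ inside the factor $\bigl(1-\beta(\bfx,\bfx')\cdot 2e^{-\theta L}\bigr)^{N}$ can only shrink that factor (the quantity being subtracted from $1$ grows). Taking the product over all $\bfx'\in\cA/\{\bfx\}$ therefore gives
\begin{equation*}
\prod_{\bfx' \in \cA/\{\bfx\}} \bigl(1-\beta(\bfx,\bfx')\cdot 2e^{-\theta L}\bigr)^{N}
\;\geq\;
\bigl(1-2e^{-\theta L}\bigr)^{N(M-1)}.
\end{equation*}

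Next I would turn this into a bound involving $2^{n}$ rather than $M$. Since $\cA\subseteq\{0,1\}^{n}$ we have $M-1\leq 2^{n}$, and because $1-2e^{-\theta L}\in[0,1]$ (for $L$ large enough that $2e^{-\theta L}\leq 1$, which is automatic under the standing assumption on $\Delta$), raising this base to a larger exponent only decreases it. Hence
\begin{equation*}
\bigl(1-2e^{-\theta L}\bigr)^{N(M-1)}
\;\geq\;
\bigl(1-2e^{-\theta L}\bigr)^{N\cdot 2^{n}}.
\end{equation*}
Chaining the two displayed inequalities and flipping the sign when we pass to $1 - (\cdot)$ yields
\begin{equation*}
1-\prod_{\bfx' \in \cA/\{\bfx\}} \bigl(1-\beta(\bfx,\bfx')\cdot 2e^{-\theta L}\bigr)^{N}
\;\leq\;
1-\bigl(1-2e^{-\theta L}\bigr)^{N\cdot 2^{n}}.
\end{equation*}
Adding the additive term $\tfrac{2N}{2^{n}}$ carried over from Lemma~\ref{lem:PFaulty} gives the claimed bound.

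There is essentially no substantive obstacle here; the lemma does all the work. The only care needed is bookkeeping the direction of each of the three monotonicity steps (enlarging $\beta$ shrinks each factor; shrinking each factor shrinks the product; enlarging the exponent of a base in $[0,1]$ shrinks the power) and then remembering that the outermost operation is $1-(\cdot)$, which reverses the inequality one final time before the $\tfrac{2N}{2^{n}}$ term is appended unchanged.
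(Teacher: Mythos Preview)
Your proposal is correct and matches the paper's own one-line justification, which simply invokes $\beta(\bfx,\bfx')<1$ to pass from the product in Lemma~\ref{lem:PFaulty} to the displayed bound. Your write-up is more explicit about the two monotonicity steps (replacing $\beta$ by $1$, then enlarging the exponent from $N(M-1)$ to $N\cdot 2^{n}$) and about the need for $1-2e^{-\theta L}\in[0,1]$, but the argument is the same.
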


\begin{definition}
     A \textbf{partitioning} $\cP=\{P_1, P_2, \ldots, P_M\}$ of $\cY$ is defined as the collection of disjoint submultisets of $\cY$, each of size $N$, such that for $i\in[M]$, for $(j,k)\in\binom{[N]}{2}$, $|\SCS(\bfy_j, \bfy_k)|\leq n \textrm{ and } |\SCS(\bfd'_j, \bfd'_k)|\leq L$, where $(\bfy_j, \bfd'_j), (\bfy_k, \bfd'_k) \in P_i$. 
\end{definition}
We will refer to $\cP^* \triangleq \{\cS_N((\bfx_i,\bfd_i)):i\in[M]\}$ as the \textit{true partitioning} of $R'$. Let $\mathbb{P}_{R'}$ denote the set of all possible partitionings of $R'$. Note that if $\vert\mathbb{P}_{R'}\vert = 1$ then $\mathbb{P}_{R'} = \{\cP^*\}$. Let $\cG' = (\cY, E')$, where $\cY = R'$. For $(\bfy,\bfd'),(\Tilde{\bfy},\Tilde{\bfd'}) \in\cY, ((\bfy,\bfd'),(\Tilde{\bfy},\Tilde{\bfd'}))\in E'$ if $(\bfy,\bfd') \cong(\Tilde{\bfy},\Tilde{\bfd'})$.
Note that a partitioning $\cP \in \mathbb{P}_{R'}$ corresponds to partitioning the graph $\cG'$ into $M$ cliques each of size $N$. 

\begin{proposition}
    $\vert\mathbb{P}_{R'}\vert=1$ if and only if there exists a unique partitioning of the graph $\cG'$ into $M$ cliques each of size $N$. 
\end{proposition}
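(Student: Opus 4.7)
The plan is to establish a direct bijection between the set $\mathbb{P}_{R'}$ of valid partitionings of $R'$ and the set of partitions of $\cG'$ into $M$ vertex-disjoint cliques each of size $N$. Once this bijection is set up, the ``if and only if'' statement follows immediately from the equality of cardinalities.

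First, I would unpack the definitions to show that the condition placed on a part $P_i$ in the definition of a partitioning is exactly the clique condition in $\cG'$. For any two reads $(\bfy_j,\bfd_j'),(\bfy_k,\bfd_k')\in P_i$, the partitioning definition requires $|\SCS(\bfy_j,\bfy_k)|\le n$ and $|\SCS(\bfd_j',\bfd_k')|\le L$, which by the definition of $\cong$ (applied to the address coordinate of length $n$ and to the data coordinate of length $L$) is exactly $(\bfy_j,\bfd_j')\cong(\bfy_k,\bfd_k')$, i.e., $((\bfy_j,\bfd_j'),(\bfy_k,\bfd_k'))\in E'$. Thus every $P_i$ is precisely a clique in $\cG'$.

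Now given a partitioning $\cP=\{P_1,\dots,P_M\}\in\mathbb{P}_{R'}$, each $P_i$ is a submultiset of size $N$ whose members are pairwise adjacent in $\cG'$, so each $P_i$ is a clique of size $N$; the $P_i$'s are disjoint and their multiset union equals $\cY$ (which has size $MN$), so $\cP$ is a partition of $\cG'$ into $M$ cliques of size $N$. Conversely, if $\{C_1,\dots,C_M\}$ is any partition of the vertices of $\cG'$ into $M$ cliques each of size $N$, then each $C_i$ is a submultiset of $\cY$ of size $N$, the $C_i$'s are disjoint, and every pair inside a single $C_i$ is joined by an edge of $\cG'$, which by the first paragraph means they satisfy the pairwise $\SCS$ constraint on both coordinates. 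Hence $\{C_1,\dots,C_M\}\in\mathbb{P}_{R'}$.

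These two constructions are clearly mutual inverses, giving the required bijection, from which $|\mathbb{P}_{R'}|=1$ iff $\cG'$ admits exactly one partition into $M$ cliques of size $N$. The only delicate point to handle carefully, rather than a genuine obstacle, is that $R'$ (and thus the vertex set of $\cG'$) is a multiset: two physically distinct reads may coincide as binary tuples, so the bijection must be stated at the level of the $MN$ read-instances, not at the level of distinct sequences, but the argument above is unchanged since both sides of the bijection treat $\cY$ as a multiset of $MN$ elements throughout.
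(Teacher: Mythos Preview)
Your proposal is correct and is exactly the argument the paper has in mind: the proposition is stated in the paper without proof because, as you observe, the pairwise $\SCS$ conditions in the definition of a partitioning are precisely the edge relation $(\bfy,\bfd')\cong(\tilde{\bfy},\tilde{\bfd'})$ in $\cG'$, so $\mathbb{P}_{R'}$ and the set of clique-partitions of $\cG'$ into $M$ parts of size $N$ are the same set by definition. Your remark about handling $R'$ as a multiset of $MN$ read-instances rather than distinct sequences is the only point worth making explicit, and you have done so.
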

  
In the next lemma, we derive a threshold on $L$ such that for $L\geq L_{\mathsf{Th}}, \mathbb{P}_{R'} = \{\cP^*\}$ with probability at least $1-\epsilon_1$.

\begin{restatable}{lemma}{BoundBeta}\label{L_th}
Let $n > \frac{1}{ln2}\left(\frac{N}{N-1}\right)ln\left(\frac{2N^2}{\epsilon_1^{\frac{1}{N}}}\right)$.  For 
\begin{align*}
    L \geq  \frac{2n}{\psi^2}\left(\phi + 2\sqrt{\frac{1}{n}ln\left(\frac{2N}{\epsilon_1^{\frac{1}{N}}}\right)}\right)^2 \triangleq L_{\mathsf{Th}}, 
\end{align*}
where $\phi = \gamma_2 + 2 + 2\sqrt{\ln{2}\left(\frac{N+2}{N}\right)}$ and $\psi = (1-p)(2-\gamma_2)-1$, we have that $\mathbb{P}_{R'} = \{\cP^*\}$ with probability at least $1-\epsilon_1$.
\end{restatable}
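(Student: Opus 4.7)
The plan is to show that, with probability at least $1-\epsilon_1$, the only valid partitioning of $R'$ is $\cP^*$. First note that $\cP^*$ always lies in $\mathbb{P}_{R'}$: the $N$ reads in $\cS_N((\bfx_i,\bfd_i))$ all share $\bfx_i$ (resp.\ $\bfd_i$) as a common supersequence of length $n$ (resp.\ $L$), so both intra-cluster SCS constraints hold deterministically. Hence what is left is to show that no alternative partition exists, and by the preceding proposition this reduces to showing that $\cG'$ admits only one partition into $M$ cliques of size $N$.

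A sufficient condition for that uniqueness is that $\cG'$ contains no edges between reads from distinct true clusters, since then $\cG'$ decomposes into exactly $M$ disjoint $N$-cliques, whose only $N$-clique partition is $\cP^*$. An edge in $\cG'$ requires both the address and data parts to be confusable, so I drop the address condition and work only with the data-confusability event, which yields an upper bound. For a pair of reads from distinct strands $\bfd_i\neq\bfd_{i'}$, since the $\bfd_i$ are drawn independently and uniformly from $\{0,1\}^L$ and deletions preserve uniformity conditional on the retained positions, $\bfd'_j$ and $\bfd'_k$ are independent uniform binary strings once one conditions on their lengths.

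The main calculation conditions on the event $A^*_c$ with $c=\sqrt{2/\Delta}$, under which every data read has length in $[(1-p-c)L,(1-p+c)L]$; by Corollary~\ref{cor:ProbAcLB} its complement has probability at most $2N/2^n$, which is controlled by the lower bound on $n$. The identity $c\sqrt{L}=\sqrt{2n}$ (since $L=\Delta n$) is what converts additive $cL$ deviations into $\sqrt{n}$ contributions in the final exponent. Rewriting $|\SCS(\bfd'_j,\bfd'_k)|\leq L$ as $|\LCS(\bfd'_j,\bfd'_k)|\geq |\bfd'_j|+|\bfd'_k|-L$ and applying Lemma~\ref{lem:Xkl} with the coarser bound $\mathbb{E}[|\LCS|]\leq\gamma_2\max\{|\bfd'_j|,|\bfd'_k|\}$, the deviation to bound is at least $[\psi-c(2+\gamma_2)]L$, giving a single-pair confusability probability of the form $2\exp\bigl\{-[\psi\sqrt{L}-(2+\gamma_2)\sqrt{2n}]^2/4\bigr\}$. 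Using $\gamma_2\max$ rather than $\gamma_2(k+\ell)/2$ is what produces the constant $2+\gamma_2$ appearing in $\phi$.

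To finish, I will union-bound over all cross-cluster pairs (at most $\binom{NM}{2}$ with $M\leq 2^n$), add the $A^*_c$ failure probability, set the total at most $\epsilon_1$, and solve for $L$ to obtain $L_{\mathsf{Th}}$. The hypothesis on $n$ guarantees that the Bernoulli step in Corollary~\ref{cor:ProbAcLB}, the sign requirement $\psi>c(2+\gamma_2)$, and the subsequent logarithmic simplifications all remain valid. The main obstacle is bookkeeping the constants: matching the precise form $\phi=\gamma_2+2+2\sqrt{\ln 2\,(N+2)/N}$ and the $\epsilon_1^{1/N}$ inside $\ln(2N/\epsilon_1^{1/N})$ requires carefully splitting the aggregated logarithm from the union bound and the $A^*_c$ complement (via, e.g., $\sqrt{a+b}\leq\sqrt{a}+\sqrt{b}$) and absorbing the $n$-independent portion into $\phi$, which accounts for the somewhat unusual shape of the final threshold.
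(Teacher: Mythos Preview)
Your plan proves \emph{a} threshold, but not the threshold $L_{\mathsf{Th}}$ stated in the lemma, and the gap is structural rather than cosmetic. You take as sufficient condition that $\cG'$ has \emph{no} cross-cluster edge, and then union-bound over all $\binom{NM}{2}$ cross-cluster pairs. The paper instead uses the weaker sufficient condition that every address $\bfx$ has at least one read in $\cS_N((\bfx,\bfd))$ that is not confusable: such a read can only be clustered with its $N-1$ sibling reads, which forces that cluster to equal $\cS_N((\bfx,\bfd))$ and hence pins down the entire partition. The bad event for a given $\bfx$ is therefore ``all $N$ of its reads are confusable,'' whose probability is the $N$-th power of the single-read bound from Lemma~\ref{lem:PFaulty}; a final union over the $2^n$ addresses then gives $2^n\bigl(2N\,2^n e^{-\theta L}+2N/2^n\bigr)^N\le\epsilon_1$.

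It is precisely this $N$-th power that produces the $\epsilon_1^{1/N}$ inside $\ln(2N/\epsilon_1^{1/N})$ and, after the simplification $(\epsilon_1/2^n)^{1/N}-2N/2^n>(\epsilon_1/2^{2n})^{1/N}$ enabled by the hypothesis on $n$, the factor $(N+2)/N$ in $\phi$. Your pairwise union bound gives instead $\theta L\gtrsim 2n\ln 2+\ln(1/\epsilon_1)$, i.e.\ $\tau=2\ln 2$ rather than $\ln 2\cdot(N+2)/N$ and $\gamma\approx\ln(1/\epsilon_1)$ rather than $(1/N)\ln(1/\epsilon_1)$; no amount of ``bookkeeping the constants'' or $\sqrt{a+b}\le\sqrt a+\sqrt b$ manipulations will recover those $N$-dependent improvements, because they come from an event you never consider. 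So the missing idea is the per-address anchoring argument together with the $N$-th-power step, not the tail bound on $|\SCS|$, which you handle correctly.
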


\begin{proof}    
Note that for every $\bfx\in\cA$, if there exists at least one $(\bfy,\bfd')\in\cS_N((\bfx,\bfd))$ such that 
 $(\bfy,\bfd')$  is not confusable, then the only valid partitioning is $\cP^*$. Let $\mathsf{X_{conf}}$ denote the set of left nodes with $\cS_N((\bfx,\bfd))\subset\mathsf{R_{conf}}$. 
From Markov Inequality, 
\begin{align*}
     P(\bfx \in \mathsf{X_{conf}}) 
     & = P(\mathbb{I}_{\cS_N((\bfx,\bfd))\subset \mathsf{R_{conf}}}\geq1)\\
     & \leq\mathbb{E}\left[\mathbb{I}_{\cS_N((\bfx,\bfd))\subset \mathsf{R_{conf}}}\right] 
     = \left(P(\mathbb{I}_{(\bfy,\bfd')\subset \mathsf{R_{conf}}} = 1)\right)^N.
\end{align*}
Therefore, from Lemma \ref{lem:PFaulty}, $P(\bfx \in \mathsf{X_{conf}})$ is at most 
\begin{align*}
   \left(1-\left(1-2e^{-\theta L}\right)^{N2^n} + \frac{2N}{2^n}\right)^N.
\end{align*}
From linearity of expectation, $\mathbb{E}\left[|\mathsf{X_{conf}}|\right]$ is at most
\begin{align*}
2^n \left(1-\left(1-2e^{-\theta L}\right)^{N2^n} + \frac{2N}{2^n}\right)^N.
\end{align*}
From Markov inequality, $P(|\mathsf{X_{conf}}|\geq 1)\leq \mathbb{E}\left[|\mathsf{X_{conf}}|\right]$. Hence,  $ P(|\mathsf{X_{conf}}|<1)$ is at least
\begin{align*}
   &= 1-2^n \left(1-\left(1-2e^{-\theta L}\right)^{N2^n} + \frac{2N}{2^n}\right)^N. 
\end{align*}
Further, using Bernoulli's Inequality, we get
\begin{align*}
    P(|\mathsf{X_{conf}}|<1) &> 1-2^n \left(2N2^ne^{-\theta L} + \frac{2N}{2^n}\right)^N.
\end{align*}
It can be verified that $P(|\mathsf{X_{conf}}|<1)\geq 1-\epsilon_1$ if $ \theta > \frac{1}{\Delta n}\left(\ln{(2N2^n)} - ln{\left(\left(\frac{\epsilon_1}{2^n}\right)^{\frac{1}{N}}-\frac{2N}{2^n}\right)}\right)$. \\

Since $n > \frac{1}{ln2}\left(\frac{N}{N-1}\right)ln\left(\frac{2N^2}{\epsilon_1^{\frac{1}{N}}}\right) \triangleq n_0$, we have that 
\begin{align*}
\left(\frac{\epsilon_1}{2^n}\right)^{\frac{1}{N}}-\frac{2N}{2^n} > \left(\frac{\epsilon_1}{2^{2n}}\right)^{\frac{1}{N}}.
\end{align*}
Let $\tau = \ln{2}\left(\frac{N+2}{N}\right)$ and $\gamma = \ln\left(\frac{2N}{\epsilon_1^{\frac{1}{N}}}\right)$.
Therefore, for $n >n_0$, if 
\begin{align*}
    \theta > \frac{\tau}{\Delta} + \frac{\gamma}{n\Delta}.
\end{align*} 
then we get $P(|\mathsf{X_{conf}}|<1)\geq 1-\epsilon_1$. Furthermore, if
\begin{align*}
\Delta > \left(\dfrac{(\gamma_2 + 2 + 2\sqrt{\tau} + 2\sqrt{\frac{\gamma}{n}})}{(1-p)(2-\gamma_2)-1}\right)^2   
\end{align*}
then $\theta > \dfrac{\tau}{\Delta} + \dfrac{\gamma}{n\Delta}$.



\end{proof}

\begin{definition}
    Given a partitioning $\cP=\{P_1,P_2, \ldots, P_M\}$, we define a \textbf{labelling}, denoted by $\cL$, as a length-$M$ vector of distinct addresses from $\cA$ such that $\cL[i] \in \{\bfx: \forall (\bfy,\bfd')\in P_i, P(\bfx|\bfy)>0\}$, where $\cL[i]$ denotes the $i$-th element of $\cL$, and $i\in[M]$.  
\end{definition}

We denote the set of all possible labellings for a given partitioning $\cP$ by $\mathbb{L}_{\cP,R'}$. Given the true partitioning $\cP^*$, we define the \textit{true labelling}, denoted by $\cL^*$, as the labelling in which for each partition $\cS_N((\bfx_i,\bfd_i))$, the assigned label is $\bfx_i$, where $i\in[M]$. Note that if $\cP\neq\cP^*$ then $\cL^*\notin\mathbb{L}_{\cP,R'}$. Further, if $|\mathbb{L}_{\cP^*,R'}| = 1$ then $\mathbb{L}_{\cP^*,R'} = \{\cL^*\}$. Let $\cG'' = (\cX,E'')$, where $\cX = \cA$. There is a directed edge $\bfx\rightarrow\tx$ if $\{\tx\} \in \{\bigcap_{(\bfy,\bfd')\in\cS_N((\bfx,\bfd))}E_{(\bfy,\bfd')}\}$. 

\begin{proposition}
$|\mathbb{L}_{\cP^*,R'}| = 1$ if and only if there are no directed cycles in $\cG''$.    
\end{proposition}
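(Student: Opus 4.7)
The plan is to identify every labelling of the true partitioning with a permutation of $[M]$ and reduce the uniqueness question to a purely graph-theoretic one about $\cG''$.

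First I would observe that, since a labelling $\cL \in \mathbb{L}_{\cP^*,R'}$ consists of $M$ \emph{distinct} addresses from $\cA$ and since $\cA = \{\bfx_1,\ldots,\bfx_M\}$, any such $\cL$ is determined by a permutation $\sigma$ of $[M]$ via $\cL[i] = \bfx_{\sigma(i)}$. The true labelling $\cL^*$ corresponds to $\sigma = \mathrm{id}$. The validity requirement $\cL[i] \in \{\bfx : P(\bfx\mid\bfy) > 0 \text{ for all } (\bfy,\bfd') \in P_i\}$ rewrites as $\bfx_{\sigma(i)} \in \bigcap_{(\bfy,\bfd')\in \cS_N((\bfx_i,\bfd_i))} E_{(\bfy,\bfd')}$, which is exactly the condition for the directed edge $\bfx_i \to \bfx_{\sigma(i)}$ to be present in $\cG''$. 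Thus $\mathbb{L}_{\cP^*,R'}$ is in bijection with the set of permutations $\sigma$ of $[M]$ such that $\bfx_i \to \bfx_{\sigma(i)}$ is an edge of $\cG''$ for every $i$.

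For the ($\Leftarrow$) direction I would argue by contrapositive: suppose $|\mathbb{L}_{\cP^*,R'}| > 1$ and pick a valid $\sigma \neq \mathrm{id}$. Decompose $\sigma$ into disjoint cycles; at least one cycle $(i_1\ i_2\ \cdots\ i_k)$ has length $k \geq 2$. Reading off the edges $\bfx_{i_1} \to \bfx_{\sigma(i_1)} = \bfx_{i_2}, \ \bfx_{i_2} \to \bfx_{i_3}, \ \ldots, \ \bfx_{i_k} \to \bfx_{i_1}$, all of which belong to $\cG''$ by validity of $\sigma$, produces a directed cycle in $\cG''$.

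For the ($\Rightarrow$) direction I would take a directed cycle $\bfx_{i_1} \to \bfx_{i_2} \to \cdots \to \bfx_{i_k} \to \bfx_{i_1}$ with $k \geq 2$ and distinct vertices, and define the permutation $\sigma$ that acts as this $k$-cycle on $\{i_1,\ldots,i_k\}$ and as the identity elsewhere. By construction each edge $\bfx_i \to \bfx_{\sigma(i)}$ lies in $\cG''$ (using the self-loops $\bfx_i \to \bfx_i$ for the fixed points, which are present because $\bfx_i$ is always a valid label for its own partition), so $\sigma$ determines a labelling in $\mathbb{L}_{\cP^*,R'}$ distinct from $\cL^*$. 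This forces $|\mathbb{L}_{\cP^*,R'}| \geq 2$.

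The only real content is the bijection in the first step — the two directions afterwards are then straightforward cycle-decomposition arguments. The main obstacle I expect is purely notational: keeping the distinction between partitions $P_i$, their true addresses $\bfx_i$, and the neighbour sets $E_{(\bfy,\bfd')}$ clean enough that the equivalence between ``$\sigma(i)$ is a valid relabelling of $P_i$'' and ``$\bfx_i \to \bfx_{\sigma(i)}$ in $\cG''$'' is transparent; distinctness of labels across $[M]$ is automatic because $\sigma$ is a permutation.
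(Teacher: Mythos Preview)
The paper states this proposition without proof, so there is nothing to compare your approach against; your argument via the bijection between labellings and permutations, followed by cycle decomposition, is the natural one and is correct.

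One minor wording issue: in the $(\Rightarrow)$ direction you appeal to self-loops $\bfx_i \to \bfx_i$ in $\cG''$, but the paper evidently intends to exclude these (otherwise every vertex carries a length-$1$ directed cycle and the proposition would be vacuous; the proof of the lemma immediately following also ranges only over $\bfx' \in \cA\setminus\{\bfx\}$ when counting outgoing edges). The fix is trivial and you essentially already note it: for fixed points of $\sigma$ you do not need any edge of $\cG''$ at all --- you only need that $\bfx_i$ is a valid label for its own partition $\cS_N((\bfx_i,\bfd_i))$, which is immediate since every read of $\bfx_i$ has $\bfx_i$ among its neighbours in $\cG$. With that adjustment to the phrasing, the proof is complete.
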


In the next lemma, we derive  a threshold on $N$ such that for $N\geq N_\mathsf{Th}, \mathbb{L}_{\cP^*,R'} = \{\cL^*\}$ with probability at least $1-\epsilon_2$. 

\begin{algorithm}[]
\caption{}
\label{Alg:Compute_Zeta}
\begin{algorithmic}[1]
\Procedure{Compute Zeta}{$p, \bfa,\bfb$}
\State $\zeta_{p}(\bfa,\bfb) = 0, k_1 = 0$
\While{$k_1 \leq n$}
    \For{$\bfa' \in D_{k_1}(\bfa)$}
            \If{$\bfa' \in D_{k_1}(\bfb) $}
            \State $\zeta_{p}(\bfa,\bfb) \overset{+}{=} \omega_{\bfa'}(\bfa)\cdot p^{k_1} \cdot (1-p)^{n - k_1} $
            \EndIf
    \EndFor
    $k_1 \overset{+}{=} 1$
\EndWhile
\Return $\zeta_{p}(\bfa,\bfb)$
\EndProcedure
\vspace{0.2cm}
\end{algorithmic}
\end{algorithm}

\begin{restatable}{lemma}{BoundOnN}\label{N_th}
     For $N\geq N_\mathsf{Th}$ 
     we have that $\mathbb{L}_{\cP^*,R'} = \{\cL^*\}$ with probability at least $1-\epsilon_2$, where $N_{\mathsf{Th}}$ is 
     \begin{align*}
    \underset{N\in\mathbb{Z_{+}}}{\argmin} \left(\sum_{\bfx\in \cX}\left(1-\prod_{\bfx'\in\cA /\{\bfx\}} \left(1-(\zeta(\bfx,\bfx'))^{N}\right)\right)  \leq \epsilon_2 \right)
     \end{align*}
\end{restatable}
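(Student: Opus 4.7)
The plan is to upper bound the probability that the true labelling is not unique by the probability that the graph $\cG''$ contains any directed edge $\bfx \to \bfx'$ with $\bfx' \neq \bfx$, and to compute that probability via the quantity $\zeta_p$ returned by Algorithm~\ref{Alg:Compute_Zeta}. By the proposition immediately preceding the lemma, $|\mathbb{L}_{\cP^*,R'}|=1$ iff $\cG''$ has no directed cycles; since any cycle requires at least one (non-self) edge, the stronger event ``$\cG''$ has no non-trivial edges at all'' already implies uniqueness, so it suffices to bound the probability of this stronger event.

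For a fixed ordered pair $\bfx\in \cA$ and $\bfx' \in \cA/\{\bfx\}$, the edge $\bfx \to \bfx'$ exists iff every read $(\bfy,\bfd') \in \cS_N((\bfx,\bfd))$ satisfies $P(\bfx'|\bfy) > 0$, equivalently iff $\bfy$ is a subsequence of $\bfx'$. Inspecting Algorithm~\ref{Alg:Compute_Zeta}, the probability that a single $\BDC(p)$ output of $\bfx$ is simultaneously a subsequence of $\bfx'$ is exactly $\zeta_p(\bfx,\bfx')$. Since the $N$ reads in the partition are generated independently, this gives $P(\bfx \to \bfx')=\zeta_p(\bfx,\bfx')^N$.

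Now fix $\bfx$ and let $B_\bfx$ be the event that some non-trivial edge leaves $\bfx$ in $\cG''$. I would model the randomness in the $N$ reads as a product of independent Bernoulli deletion indicators, one per channel position per read. Each event $\{\bfx \to \bfx'\}$ is monotone increasing in this product lattice: deleting more symbols can only shorten a read and hence make it more likely to embed as a subsequence of any fixed $\bfx'$. By the FKG inequality for product measures, monotone-increasing events are positively correlated, hence so are their complements, giving
\begin{align*}
P(B_\bfx) \;=\; 1 - P\!\left(\bigcap_{\bfx' \neq \bfx}\{\bfx \not\to \bfx'\}\right) \;\leq\; 1 - \prod_{\bfx' \in \cA/\{\bfx\}}\bigl(1 - \zeta_p(\bfx,\bfx')^N\bigr).
\end{align*}
A union bound over $\bfx \in \cX$ then matches the sum in the statement, and the right-hand side is monotonically decreasing in $N$, so choosing $N \geq N_{\mathsf{Th}}$ makes it at most $\epsilon_2$, completing the argument.

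The main obstacle is the last displayed inequality: a naive union bound over $\bfx' \neq \bfx$ only gives $P(B_\bfx) \leq \sum_{\bfx'} \zeta_p(\bfx,\bfx')^N$, which is strictly larger than $1 - \prod(1 - \zeta_p^N)$ by the Weierstrass product inequality. Obtaining the sharper product form claimed in the statement is what requires the FKG/monotonicity argument sketched above, and verifying the monotonicity in the deletion-pattern lattice is the only nontrivial ingredient; the rest of the proof is accounting.
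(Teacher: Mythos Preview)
Your approach mirrors the paper's: reduce uniqueness of the labelling to the event that $\cG''$ has no non-trivial outgoing edges, compute $P(\bfx \to \bfx') = \zeta_p(\bfx,\bfx')^N$ from independence of the $N$ reads, bound $P(B_\bfx)$ by $1-\prod_{\bfx'\neq\bfx}(1-\zeta_p(\bfx,\bfx')^N)$, and then sum over $\bfx\in\cX$ (the paper phrases this last step as Markov's inequality applied to $|\mathsf{X_{conf}}|$, which is the same union bound).

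The one substantive difference is the passage to the product form. The paper simply writes
\[
P(\text{$\bfx$ has no outgoing edge}) \;=\; \prod_{\bfx'\in\cA/\{\bfx\}}\bigl(1-\zeta_p(\bfx,\bfx')^N\bigr)
\]
as an equality, tacitly treating the events $\{\bfx\to\bfx'\}_{\bfx'\neq\bfx}$ as independent. You correctly observe that these events all depend on the same $N$ channel outputs of $\bfx$ and are therefore not independent, and you supply an FKG argument to recover the product as a lower bound. Your monotonicity check is sound: ordering the product space by the deletion indicators, flipping an indicator from $0$ to $1$ passes each output $\bfy$ to a subsequence of $\bfy$, which remains a subsequence of $\bfx'$ whenever $\bfy$ was; hence each event $\{\bfx\to\bfx'\}$ is increasing, their complements are decreasing, and FKG gives the claimed product inequality. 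So your proof is the paper's proof with that one implicit step made rigorous; the remark about monotone decrease in $N$ is also a small addition the paper leaves implicit in the definition of $N_{\mathsf{Th}}$.
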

\begin{proof}    
Let $\mathsf{X_{conf}}$ denote the set of nodes in $\cG''$ that have at least one outgoing edge. For $\bfx,\tx \in \cA$, the probability of $\bfx\rightarrow\tx$ is $(\zeta(\bfx,\bfx'))^{N}$. Therefore, the probability that $\bfx$ has no outgoing edges is $\prod_{\bfx'\in\cA /\{\bfx\}} \left(1-(\zeta(\bfx,\bfx'))^{N}\right)$. Hence, from linearity of expectation,
\begin{align*}
  \mathbb{E}\left[|\mathsf{X_{conf}}|\right] &= \sum_{\bfx\in \cX}\mathbb{E}[\mathbb{I}_{\{\bfx\in\mathsf{X_{conf}}\}}]\\
  &= \sum_{\bfx\in \cX}\left(1-\prod_{\bfx'\in\cA /\{\bfx\}} \left(1-(\zeta(\bfx,\bfx'))^{N}\right)\right).
\end{align*}
From Markov inequality, $P(|\mathsf{X_{conf}}|\geq 1)\leq \mathbb{E}\left[|\mathsf{X_{conf}}|\right]$. Hence, we get $P(|\mathsf{X_{conf}}|<1)$ is at least
\begin{align*}
    1-  \sum_{\bfx\in \cX}\left(1-\prod_{\bfx'\in\cA /\{\bfx\}} \left(1-(\zeta(\bfx,\bfx'))^{N}\right)\right). ~
\end{align*}
\end{proof}
\vspace{-0.2cm}
Thus, we define the region $\cR$ as $\cR\triangleq\{(\beta,N):\beta\geq\beta_\mathsf{Th}, N\geq N_{\mathsf{Th}}\}$. In the next theorem, we give a sufficient condition for the existence of a unique $N$-permutation. 


\begin{restatable}{theorem}{TheOfUniquePermutation}\label{thm:prob1}
For $(L,N)\in\cR$, it is possible to identify the true permutation with probability at least $1-\epsilon$, if $\epsilon_1, \epsilon_2 < \frac{\epsilon}{2}$. 
\end{restatable}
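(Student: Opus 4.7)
The plan is to show that identifying the true permutation $\pi$ decomposes into two independent-looking tasks — recovering the true partitioning $\cP^*$ and then recovering the true labelling $\cL^*$ — and to combine the two high-probability guarantees already established in \Lref{L_th} and \Lref{N_th} by a union bound.

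First I would observe that, by the discussion preceding \Lref{L_th}, knowing $R'$ and $\cA$ suffices to reconstruct $\pi$ if and only if $\mathbb{P}_{R'}=\{\cP^*\}$ and $\mathbb{L}_{\cP^*,R'}=\{\cL^*\}$ both hold. Indeed, recovering $\pi$ amounts to grouping the $MN$ noisy reads into the correct $M$ groups of size $N$ (the partitioning) and then associating each group with its generating address in $\cA$ (the labelling); ambiguity in either step yields more than one valid $N$-permutation consistent with the observation, while uniqueness in both steps pins down $\pi$ exactly.

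Next, let $\cE_1$ be the event $\{\mathbb{P}_{R'}=\{\cP^*\}\}$ and $\cE_2$ the event $\{\mathbb{L}_{\cP^*,R'}=\{\cL^*\}\}$. Since $(L,N)\in\cR$, we have $L\geq L_{\mathsf{Th}}$ and $N\geq N_{\mathsf{Th}}$, and hence by \Lref{L_th} and \Lref{N_th},
\begin{align*}
P(\cE_1) \;\geq\; 1-\epsilon_1, \qquad P(\cE_2) \;\geq\; 1-\epsilon_2.
\end{align*}
A union bound then gives
\begin{align*}
P(\cE_1 \cap \cE_2) \;\geq\; 1 - \epsilon_1 - \epsilon_2.
\end{align*}
Plugging in $\epsilon_1,\epsilon_2 < \epsilon/2$ yields $P(\cE_1\cap\cE_2) > 1-\epsilon$, which is exactly the claim that the true permutation can be identified with probability at least $1-\epsilon$.

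There is no real obstacle here, since the two main ingredients are already proved; the only thing to be slightly careful about is verifying the equivalence in the first step, namely that the joint uniqueness of the partitioning and the labelling is both necessary and sufficient for permutation recovery. This follows directly from the definitions of $\mathbb{P}_{R'}$ and $\mathbb{L}_{\cP^*,R'}$ and from the fact that $\cL^*\notin\mathbb{L}_{\cP,R'}$ whenever $\cP\neq\cP^*$, which was already noted in the excerpt. Hence the theorem reduces cleanly to a union bound and no further probabilistic machinery is needed.
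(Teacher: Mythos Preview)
Your proposal is correct and follows essentially the same approach as the paper: decompose permutation recovery into uniqueness of the partitioning (\Lref{L_th}) and uniqueness of the labelling (\Lref{N_th}), then combine the two guarantees. The only difference is that the paper multiplies the two success probabilities to obtain $(1-\tfrac{\epsilon}{2})^2>1-\epsilon$, whereas you use a union bound to get $1-\epsilon_1-\epsilon_2>1-\epsilon$; your version is arguably cleaner since it does not implicitly rely on independence of $\cE_1$ and $\cE_2$.
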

\begin{proof}
    From Lemma~\ref{L_th} and~\ref{N_th}, it follows that for $L>L_\mathsf{Th}$ and $N>N_{\mathsf{Th}}$, $\mathbb{P}_{R'} = \{\cP^*\}$ with probability $(1-\epsilon_1)$ and $\mathbb{L}_{\cP^*,R'} = \{\cL^*\}$ with probability $(1-\epsilon_2)$, respectively. Hence, for $L>L_\mathsf{Th}$ and $N>N_{\mathsf{Th}}$, there exists only one valid permutation with probability $(1-\frac{\epsilon}{2})^2>(1-\epsilon)$. 
\end{proof}

\section{Permutation Recovery Algorithm}
\label{sec:perm_alg}
As previously mentioned, we split the task of identifying the true permutation into two steps. In the first step, we identify a partitioning via a clustering procedure.
We define the bipartite graph $\cG^{*} = ((\cA, \cP), \cE^{*})$, where $\cP$ is partitioning of $\cY$.  In the second step we find  a labelling for the partitioning $\cP$ using a minimum-cost algorithm (such as \cite{EK72,T71}).   

Let $\cN_{(\bfy,\bfd')}$ denote the two-hop neighborhood of $(\bfy,\bfd')$ in $\cG$. The clustering algorithm as described in Algorithm~\ref{alg:permRec}, iteratively selects the right node $(\bfy,\bfd')$ with the smallest two-hop neighborhood in $\cY$ and then performs $|\cN_{(\bfy,\bfd')}|$ data comparisons to identify the remaining $N-1$ copies. 

Let $\cP_\cG = \left(\cX \cup \cP, \cP_E\right)$ denote the bipartite matching identified by the minimum cost algorithm.  

\begin{algorithm}[]
\caption{Permutation Recovery Algorithm}
\label{alg:permRec}
\begin{algorithmic}[1]
\Procedure{Prune}{$(\Tilde{\bfy},\Tilde{\bfd'})$}
\State 
$(\widetilde{\bfy},\widetilde{\bfd'})\longrightarrow\mathsf{Pruned}$, $\cT = \{\{(\Tilde{\bfy},\Tilde{\bfd'})\}\}$
\For{$(\bfy,\bfd')\in  \cN_{(\Tilde{\bfy},\Tilde{\bfd'})}$}
    \If{$(\bfy,\bfd')\cong (\Tilde{\bfy},\Tilde{\bfd'})$}
    \State $(\bfy,\bfd') \longrightarrow \cT $
    \EndIf
\EndFor

\If{$|\cT| = N$}
\State Let $\cX^* = \bigcap_{(\bfy,\bfd')\in\cT}E_{(\bfy,\bfd')}$
\State
$W = \{w_{\bfx} \triangleq \gamma((\bfx,\cT)) : \bfx \in \cX^*\}$
\For{$(\bfy,\bfd')\in \cT$}    
    \State Remove $\{(\bfx, (\bfy,\bfd')):\bfx\notin \cX^*\}$ from $E$
    \For{$\bfx \in \cX^*$}
        \State
        $w_{\bfx} \overset{+}{=} \gamma(\bfx, (\bfy,\bfd'))$
    \EndFor
\EndFor

\State $\cT \longrightarrow\mathsf{\cP}$

\For{$\bfx \in \cX^*$}
    \State
    $\left(\bfx, \cT, \dfrac{w_{\bfx}}{N}\right)\longrightarrow\mathsf{\cE^*}$
    \State
\EndFor
    
\EndIf
\EndProcedure
\vspace{0.2cm}

\Procedure{Clustering Algorithm}{$\cP_\cG, \cG, \cG^*$}
\State$\mathsf{Pruned}=\{\{\}\}$
\While{$|\mathsf{Pruned}|<N2^n$}
\State $(\Tilde{\bfy},\Tilde{\bfd'}) = \argmin\{|\cN_{(\bfy,\bfd')}|:(\bfy,\bfd')\in \cY\}$
\State PRUNE $((\Tilde{\bfy},\Tilde{\bfd'}))$
\EndWhile
\State \Return $\mathsf{MCM}(\cP_\cG,\cG^*)$

\EndProcedure
\end{algorithmic}
\end{algorithm}

\begin{restatable}{proposition}{ProlgFindsPermutation}
\label{prop2}
For $(L,N)\in\cR$, Algorithm~\ref{alg:permRec} finds the true permutation with probability at least $1-\epsilon$, when $\epsilon_1, \epsilon_2 < \frac{\epsilon}{2}$. 
\end{restatable}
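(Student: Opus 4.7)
The plan is to derive Proposition~\ref{prop2} from Theorem~\ref{thm:prob1} by separating the statement into a probabilistic part, handled by Theorem~\ref{thm:prob1}, and a deterministic correctness part for Algorithm~\ref{alg:permRec}. Since $\epsilon_1,\epsilon_2<\epsilon/2$ and $(L,N)\in\cR$, a union bound combined with Lemmas~\ref{L_th} and~\ref{N_th} gives that the event
\[
\cU \;=\; \{\mathbb{P}_{R'}=\{\cP^*\}\} \,\cap\, \{\mathbb{L}_{\cP^*,R'}=\{\cL^*\}\}
\]
has probability at least $1-\epsilon$. Conditioned on $\cU$, the remaining argument is purely structural: I only need to show that Algorithm~\ref{alg:permRec} outputs $\cP^*$ in its clustering phase and $\cL^*$ in its minimum-cost matching phase.

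For the clustering phase, the key observation is that any two reads originating from the same strand are automatically confusable, since each is a subsequence of the common strand of length $n+L$, so every true partition $\cS_N((\bfx_i,\bfd_i))$ forms a clique in the confusability graph $\cG'$. The proof of Lemma~\ref{L_th} shows that, under $\cU$, each true partition contains at least one \emph{anchor} read that is not confusable with any read outside its partition. When the greedy rule selects such an anchor $(\tilde{\bfy},\tilde{\bfd'})$, the cluster $\cT$ formed by PRUNE consists of exactly the $N$ copies of the same strand: the other $N-1$ copies belong to $\cT$ by the clique property, while no outside read can belong by the anchor property. Therefore each $N$-sized cluster that the algorithm accepts is a true partition element, and since exactly $M$ such partitions must be found, the algorithm terminates with $\cP=\cP^*$.

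For the labelling phase, under $\cU$ the graph $\cG^*$ constructed from $\cP^*$ admits exactly one perfect matching $\cX\to\cP^*$ whose edges all correspond to positive likelihoods. Hence the minimum-cost matching algorithm returns the unique feasible assignment, which is $\cL^*$ by definition. Combining the two phases, Algorithm~\ref{alg:permRec} recovers the true permutation on the event $\cU$, giving the desired bound $1-\epsilon$.

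The main obstacle I anticipate is justifying the clustering-correctness step in the face of the greedy $\argmin$ ordering: one must ensure that the algorithm does not freeze an incorrect cluster before an anchor read is processed, and that clusters of size other than $N$ are harmlessly skipped without polluting $\cP$. The cleanest route is structural, as sketched: any size-$N$ cluster emitted by PRUNE consists of reads all confusable with its seed, and under the uniqueness of $\cP^*$ the only collections of $N$ mutually confusable reads that can be assembled are the true partitions themselves.
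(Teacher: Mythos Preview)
Your decomposition into clustering correctness plus labelling correctness, followed by combining the bounds from Lemmas~\ref{L_th} and~\ref{N_th}, is exactly the paper's approach; the paper uses $(1-\epsilon/2)^2>1-\epsilon$ instead of a union bound, which is immaterial. However, two steps in your write-up do not go through as stated.

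First, the paper conditions directly on the anchor event extracted from the \emph{proof} of Lemma~\ref{L_th} (``every address node has at least one channel output which is not confusable''), whereas you define $\cU$ via its consequence $\{\mathbb{P}_{R'}=\{\cP^*\}\}$. Your claim that ``under $\cU$, each true partition contains at least one anchor'' therefore inverts the implication actually proved in Lemma~\ref{L_th}. The fix is simply to replace the first component of $\cU$ by the anchor event itself, which still has probability at least $1-\epsilon_1$.

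Second, your fallback structural argument---that uniqueness of $\cP^*$ forces every size-$N$ collection of mutually confusable reads to be a true partition---is false (uniqueness of a clique cover into $N$-cliques does not preclude the existence of other $N$-cliques), and in any case the set $\cT$ produced by \textsc{Prune} is a star about its seed rather than a clique. The observation that actually dispatches your greedy-ordering concern is one you already made: reads from the same strand are always confusable and always share the true address as a common neighbor in $\cG$, so $\cT$ necessarily contains the seed's entire true partition. Hence whenever $|\cT|=N$ the emitted cluster is correct regardless of whether the seed is an anchor; the anchor property is then only needed to guarantee that every partition is eventually emitted. This is what the paper's one-line ``Thus, the clustering algorithm identifies the true partitioning'' is implicitly relying on.
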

\begin{proof}
    For $L>L_{\mathsf{Th}}$, every address node has at least one channel output which is not confusable with probability at least $(1-\epsilon_1)$. Thus, the clustering algorithm identifies the true partitioning with probability at least $(1-\epsilon_1)$. For $N>N_{\mathsf{Th}}$, each left node has exactly one edge in $\cG^*$, thus there exists only one labelling, viz. the true labelling with probability at least $(1-\epsilon_2)$. Thus, the permutation recovery algorithm identifies the true permutation with probability at least $(1-\frac{\epsilon}{2})^2>(1-\epsilon)$. 
\end{proof}

   In the next lemma, we derive an upper bound on the expected number of comparisons performed by Algorithm~\ref{alg:permRec}.

\begin{lemma}\label{lem:expComp}
The expected number of comparisons performed by Algorithm~\ref{alg:permRec} is at most 
\begin{align*}
    \sum_{\bfx \in \cA}\sum_{\bfx'\in \cA} N\beta(\bfx,\bfx').
\end{align*}
\end{lemma}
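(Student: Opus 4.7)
The plan is to count comparisons call-by-call and then exploit the argmin choice of pivot to average within each true cluster. First, I would observe that each call to \textsc{Prune} with pivot $(\tilde{\bfy}, \tilde{\bfd'})$ costs exactly $|\cN_{(\tilde{\bfy}, \tilde{\bfd'})}|$ data comparisons, since the loop ranges over the two-hop neighborhood and performs the test $(\bfy, \bfd') \cong (\tilde{\bfy}, \tilde{\bfd'})$. Any two-hop neighbor $(\bfy, \bfd')$ shares a common address $\bfx \in \cA$ with $(\tilde{\bfy}, \tilde{\bfd'})$ by construction, so both $\bfy$ and $\tilde{\bfy}$ are subsequences of the same length-$n$ string $\bfx$; this makes $|\SCS(\bfy, \tilde{\bfy})| \leq n$ automatic, so the only non-trivial part of the confusability test is the data-side check $\bfd' \cong \tilde{\bfd'}$, namely one data comparison per neighbor.

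Next, on the event that the clustering step correctly identifies $\cP^*$ (which occurs with probability at least $1-\epsilon_1$ by Lemma~\ref{L_th}), the algorithm performs exactly $M$ \textsc{Prune} calls, one per true cluster $\cS_N((\bfx_i, \bfd_i))$. Since earlier iterations remove only reads from previously processed clusters, all $N$ reads of cluster $i$ are still present in $\cY$ when its call occurs, and the argmin choice of pivot therefore satisfies
\[
|\cN_{(\tilde{\bfy}_i, \tilde{\bfd'}_i)}| \;\leq\; \frac{1}{N} \sum_{(\bfy, \bfd') \in \cS_N((\bfx_i, \bfd_i))} |\cN_{(\bfy, \bfd')}|
\]
because the minimum is at most the average. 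Summing over the $M$ clusters bounds the total number of data comparisons by $\frac{1}{N} \sum_{(\bfy, \bfd') \in R'} |\cN_{(\bfy, \bfd')}|$.

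To finish, I would bound each $|\cN_{(\bfy, \bfd')}|$ by the number of reads whose address part is confusable with $\bfy$, using the same $\SCS$ argument as above. Taking expectations and noting that the channel outputs across distinct clusters are independent, for a read from the cluster at $\bfx_i$ and another from the cluster at $\bfx_j$ the probability of confusability is $\beta(\bfx_i, \bfx_j)$ by definition (with the within-cluster case giving $\beta(\bfx, \bfx) = 1$ trivially). Since each cluster contributes $N$ reads, summing over all ordered pairs of reads yields
\[
\mathbb{E}\Big[\sum_{(\bfy, \bfd') \in R'} |\cN_{(\bfy, \bfd')}|\Big] \;\leq\; \sum_{i=1}^{M} \sum_{j=1}^{M} N^{2}\, \beta(\bfx_i, \bfx_j) \;=\; N^{2} \sum_{\bfx \in \cA}\sum_{\bfx' \in \cA} \beta(\bfx, \bfx'),
\]
and dividing by $N$ gives the claimed bound. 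The main obstacle is rigorously justifying the argmin-to-average step: it relies on all $N$ reads of the current cluster being present in $\cY$ at the moment the pivot is selected, which in turn uses correctness of the clustering step. The residual failure probability contributes negligibly since the worst-case comparison count grows only polynomially in $NM$.
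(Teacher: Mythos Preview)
Your proposal is correct and takes essentially the same approach as the paper: characterize the two-hop neighborhood via address confusability, bound $\mathbb{E}[|\cN_{(\bfy,\bfd')}|]$ by $\sum_{\bfx'\in\cA}N\beta(\bfx,\bfx')$ for a read at address~$\bfx$, and sum over one pivot per cluster. The paper's proof is terser---it skips your argmin-to-average step and the conditioning on clustering success, and simply treats the selected pivot as a generic read at its address before summing the neighborhood bound over $\bfx\in\cA$; your extra care about the adaptive pivot selection is a valid elaboration of the same step rather than a different argument.
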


\begin{proof}
    Note that for $(\bfy,\bfd'), (\Tilde{\bfy},\Tilde{\bfd'})\in\cY$,  $(\bfy,\bfd')\in\cN_{(\Tilde{\bfy},\Tilde{\bfd'})}$ if and only if $\bfy\cong\Tilde{\bfy}$. Therefore,
    \begin{align*}
        \mathbb{E}\left[|\cN_{(\bfy,\bfd')}|\right] &= N-1 + \sum_{\bfx'\in \cA/\{\bfx\}} ~\sum_{\Tilde{\bfy}\in\cS_N(\Tilde{\bfx})}\mathbb{I}_{\bfy \cong \Tilde{\bfy}}.
    \end{align*}
    Since, $P(\bfy \cong \Tilde{\bfy}) = \beta(\bfx,\tx)$, we get that
    \begin{align*}
       \mathbb{E}\left[|\cN_{(\bfy,\bfd')}|\right]  \leq \sum_{\bfx'\in \cA} N\beta(\bfx,\bfx').
    \end{align*}
    Hence, the expected number of comparisons performed by the algorithm is at most $\sum_{\bfx \in \cA}\sum_{\bfx'\in \cA} N\beta(\bfx,\bfx').$
\end{proof}

Since $\beta(\bfx,\bfx') < 1$ for all $\bfx,\bfx' \in \cA$, the expected number of data comparisons performed by Algorithm \ref{alg:permRec} is only a $\kappa_{n,\cA}$-fraction of data comparisons required by clustering based approaches, where $\kappa_{n,\cA} = \dfrac{\sum_{\bfx \in \cA}\sum_{\bfx'\in \cA} \beta(\bfx,\bfx')}{M^2} $.

\newpage

\newpage

\clearpage
\appendix
\PMultiDraw*
\begin{proof} Let $\bfx \in \cA, (\bfy,\bfd') \in R'$. For a deletion channel with deletion probability $p$, the likelihood probability is 
\begin{align*}
P(\bfy \mid \bfx)=\omega_{\bfy}(\bfx) \cdot p^{n-\vert\bfy\vert}(1-p)^{\vert\bfy\vert}.
\end{align*}
Let $\bfx,\tx \in \cA$, $(\bfy, \bfd')\in\cS_N((\bfx,\bfd)), (\Tilde{\bfy},\Tilde{\bfd'})\in\cS_N((\Tilde{\bfx},\Tilde{\bfd}))$. Consider the following event denoted by $\cB$:
\begin{align*}
\gamma(\tx, (\bfy,\bfd'))<\gamma(\bfx, (\bfy,\bfd')) &\wedge \gamma(\bfx,(\Tilde{\bfy},\Tilde{\bfd'}))<\gamma(\tx, (\Tilde{\bfy},\Tilde{\bfd'})). 
\end{align*}
If $\cB$ is true then the minimum-cost algorithm will necessarily assign $(\bfy,\bfd'), (\Tilde{\bfy},\Tilde{\bfd'})$ wrongfully. However, not necessarily $(\Tilde{\bfy},\Tilde{\bfd'})$ to $\bfx$ and $(\bfy,\bfd')$ to $\tx$. Further,
\begin{align*}
P(\bfy \mid \bfx)&=\omega_{\bfy}(\bfx) \cdot p^{n-\vert\bfy\vert}(1-p)^{\vert\bfy\vert},\\
P(\bfy \mid \tx)&=\omega_{\bfy}(\tx) \cdot p^{n-\vert\bfy\vert}(1-p)^{\vert\bfy\vert}.
\end{align*}
Therefore, $\gamma(\tx, (\bfy,\bfd')) < \gamma(\bfx, (\bfy,\bfd'))$ if and only if $\omega_{\bfy}(\bfx) < \omega_{\bfx}(\tx)$. 

Let $m \leq n-4, k \leq n-3$. Let $\bfx =0^{m} 10^{n-m-3} 11$ and $\tx =0^{m} 10^{n-m-4} 100$. Furthrer 
Let $\bfy_{k}=0^{k}, \Tilde{\bfy}_{k}=0^{k-1} 1$. For each $m, n, k$, it holds that
$$
\begin{aligned}
& \omega_{\bfy_{k}}(\bfx) =\binom{n-3}{k}<\binom{n-2}{k}=\omega_{\bfy_{k}}(\tx).\\
& \omega_{\Tilde{\bfy}_{k}}(\bfx)=\left\{\begin{array}{cc}
2\binom{n-3}{k-1} & k-1>m, \\
2\binom{n-3}{k-1}+2\binom{m}{k-1} & k-1 \leq m.
\end{array}\right. \\
& \omega_{\Tilde{\bfy}_{k}}(\tx)=\left\{\begin{array}{cc}
\binom{n-4}{k-1} & k-1>m, \\
\binom{n-4}{k-1}+2\binom{m}{k-1} & k-1 \leq m.
\end{array}\right.
\end{aligned}
$$
Therefore,
\begin{align*}
    \omega_{\Tilde{\bfy}_{k}}(\bfx) >\omega_{\Tilde{\bfy}_{k}}(\tx).
\end{align*}
Note that the probability that an output of $\bfx$
is a subsequence of $\tilde{\bfy}_{n-3}$ is at least $p^3(1-p)$. Similarly, the probability that an output of $\tx$ is a subsequence of $\bfy_{n-3}$ is $p^3$. 
Therefore, for any $k\leq n-3$, when $\bfy_{k}$ is a noisy copy of $\bfx$ and $\Tilde{\bfy}_{k'}$ is a noisy copy of $\tx$, the min cost algorithm wrongfully assigns $(\bfy_{k},\bfd')$ and $(\Tilde{\bfy}_{k'},\Tilde{\bfd'})$ with probability at least 
\begin{align*}
    P_{0} & \geq 1-\left(1-p^{6}(1-p)\right)^{n-3}.
\end{align*}
\end{proof}

\probAcL*
\begin{proof}
Let $d' \in \{0,1\}^{*}$, then it can be verified that $\mathbb{E}(|d^{\prime}|) = (1-p)L$. Further, using Hoeffding's inequality, we have that
\begin{align*}
    P\left(\big\vert \left|d^{\prime}\right| -(1-p)L\big\vert \geq cL \right) &\leq 2e^{\left(-\frac{2(cL)^{2}}{L}\right)},
\end{align*}
where $c>0$. Therefore, the probability of the event $A^*_{c}$ is at least 
\begin{align*}
      \left(1-2e^{\left(-2c^{2}L\right)}\right)^{N2^n}.
\end{align*}
\end{proof}

\ProbAcLB*
\begin{proof} 
Let $\epsilon_n > 0$. Using Bernoulli's inequality we have that 
\begin{align*}
    \sqrt{\dfrac{\left(\ln{\frac{2}{1-(1-\epsilon_n)^{\frac{1}{N2^n}}}}\right)}{2}} &<
    \sqrt{\dfrac{\ln{\frac{2}{1-\left(1-\frac{1}{N2^n}\epsilon_n\right)}}}{2}}\\
    &= \sqrt{\frac{1}{2}\ln{\frac{2N}{\epsilon_n}} + \frac{\ln{2}}{2}n}\\
    &< \sqrt{\frac{1}{2}\ln{\frac{2N}{\epsilon_n}}} + \sqrt{\frac{\ln{2}}{2}}\sqrt{n} \triangleq c_0 \sqrt{L}.
\end{align*}
Hence, if $c \geq \frac{c_0}{\sqrt{\ln2}} > c_0$ then  from Lemma \ref{lem:probAcL}, we get $P(A^*_{c}) \geq 1 - \epsilon_n$.  When $\epsilon_n = \frac{2N}{2^n} $, we get $\frac{1}{\sqrt{\ln2}}c_0 = \sqrt{\frac{2}{\Delta}}$ as desired. 

\end{proof}

\PFaulty*
\begin{proof}
Let $(\tx,\tilde{\bfd})\in R$ and $(\Tilde{\bfy},\Tilde{{\bfd'}}) \in \cS_N((\tx,\tilde{\bfd}))$. From  Lemma~\ref{lem:Xkl}, we have that 
$\mathbb{E}\left(\left|\LCS\left(d^{\prime}, \tilde{d'}\right)\right|\right) \leq \gamma_{2} \max\left\{|d'|,|\tilde{d'}|\right\}$. Given $A^*_{c}$, we have that
\begin{align*}
\mathbb{E}\left(\left|\LCS\left(d^{\prime}, \tilde{d'}\right)\right|~\big\vert~ A^*_{c}\right) \leq \gamma_{2}\left((1-p)L+cL\right).
\end{align*}
Since $\left|\SCS\left(d^{\prime}, \tilde{d'}\right)\right|=\left|d^{\prime}\right|+\left|\tilde{d'}\right|-\left|\LCS\left(d^{\prime}, \tilde{d'}\right)\right|$, we have
 \begin{align*}
\mathbb{E}\left(\left|\SCS\left(d^{\prime}, \tilde{d'}\right)\right|\right) \geq |d'| + |\tilde{d'}| - \gamma_{2} \max \left\{|d'|,|\tilde{d'}|\right\}.
 \end{align*}

We now analyze the probability that  $\left|\SCS\left(d^{\prime}, \tilde{d'}\right)\right| < L$ given $A^*_{c}$. For brevity, we let $Z = \left|\LCS\left(d^{\prime}, \tilde{d'}\right)\right|$.
\begin{align*}
    & P\Big(\left|\SCS\left(d^{\prime}, \tilde{d'}\right)\right| < L ~\Bigg\vert~A^*_{c}\Big)\\
    &\leq P\Big(2\left((1-p)L-cL\right) - Z < L~\Bigg\vert~A^*_{c}\Big)\\
    &= P\Big(\mathbb{E}\left[Z\right] - Z <  \mathbb{E}\left[Z\right] -2\left((1-p)L-cL\right) +L ~\Bigg\vert~A^*_{c} \Big)\\
    &\leq  P\Big(\mathbb{E}\left[Z\right]-Z <  -\rho ~\Bigg\vert~A^*_{c} \Big)\\
    &=  P\Big(Z -\mathbb{E}\left[Z\right] >  \rho~\Bigg\vert~A^*_{c} \Big)\\
    &\leq  P\Big(\left\vert Z -\mathbb{E}\left[Z\right] \right\vert > \rho ~\Bigg\vert~A^*_{c} \Big),
\end{align*}
where $\rho = L((2-\gamma_2)(1-p)-1) - cL(2+\gamma_2)$. Since $\Delta > 2\left(\dfrac{(\gamma_2 + 2)}{(1-p)(2-\gamma_2)-1}\right)^2$ and $c = \sqrt{\frac{2}{\Delta}}$ we have $\rho > 0$. Therefore, from Lemma \ref{lem:Xkl}, we get  
\begin{align*}
    P\Big(\left|\SCS\left(d^{\prime}, \tilde{d'}\right)\right| < L ~\Bigg\vert~A^*_{c}\Big) \leq 2e^{-\dfrac{\rho^2}{4\max\{|d'|,|\tilde{d'}|\}}}
\end{align*}
Since $\max\{|d'|,|\tilde{d'}|\} < L$ and substituting $c = \sqrt{\frac{2}{\Delta}}$, we get 
\begin{align*}
     P\Big(\left|\SCS\left(d^{\prime}, \tilde{d'}\right)\right| < L ~\Bigg\vert~A^*_{c}\Big) &\leq 2e^{-\theta L},
\end{align*}
where $\theta = \frac{\Big((2-\gamma_2)(1-p)-1 - \sqrt{\frac{2}{\Delta}}(2+\gamma_2)\Big)^2}{4}$. \\

Since $P(\bfy \cong \Tilde{\bfy}) = \beta(\bfx,\tx)$, we have that
\begin{align*}   P\left((\Tilde{\bfy},\Tilde{{\bfd'}}) \cong (\bfy,\bfd') ~\Bigg\vert~A^*_{c} \right)  \leq \beta(\bfx,\tx)2e^{-\theta L}.
\end{align*}
Hence, we get $P\left(\mathbb{I}_{(\bfy,\bfd')\in \mathsf{R_{conf}}}~\Bigg\vert~A^*_{c} \right)$ as 
\begin{align*}
    &= 1 -\prod_{(\Tilde{\bfy},\Tilde{\bfd'}) \in R'/\cS_N((\bfx,\bfd))} 1-P \left((\bfy,\bfd') \cong(\Tilde{\bfy},\Tilde{\bfd'})~\Bigg\vert~A^*_{c} \right)\\
    &\leq 1-\prod_{\bfx' \in \cA/{\bfx}}\left(1-\beta(\bfx,\bfx')2e^{-\theta L}\right)^{N}.
\end{align*}
By law of total probability,
\begin{align*}
    P(\mathbb{I}_{(\bfy,\bfd')\in \mathsf{R_{conf}}})  &= P\left(\mathbb{I}_{(\bfy,\bfd')\in \mathsf{R_{conf}}}~\Bigg\vert~A^*_{c}\right)P(A^*_{c}) \\
    &+ P\left(\mathbb{I}_{(\bfy,\bfd')\in \mathsf{R_{conf}}}~\Bigg\vert~\overline{A^*}_{c}\right)P(\overline{A^*}_{c})\\
    &\leq P\left(\mathbb{I}_{(\bfy,\bfd')\in \mathsf{R_{conf}}}~\Bigg\vert~A^*_{c}\right) + (1-P(A^*_{c})).
\end{align*}
Since $c = \sqrt{\frac{2}{\Delta}}$, it follows from Corollary~\ref{cor:ProbAcLB} that 
\begin{align*}
P(\mathbb{I}_{(\bfy,\bfd')\in \mathsf{R_{conf}}}) < 1-\prod_{\bfx' \in \cA/{\bfx}}\left(1-\beta(\bfx,\bfx')2e^{-\theta L}\right)^{N} + \frac{2N}{2^n}.
\end{align*}
Since $\beta(\bfx,\bfx') < 1$, we get 
\begin{align*}
    P(\mathbb{I}_{(\bfy,\bfd')\in \mathsf{R_{conf}}}) < 1-\left(1-2e^{-\theta L}\right)^{N2^n} + \frac{2N}{2^n}.
\end{align*}
\end{proof}
\end{document}